\newcommand\redout{\bgroup\markoverwith
{\textcolor{red}{\rule[0.5ex]{2pt}{0.8pt}}}\ULon}
\definecolor{dgreen}{RGB}{0, 160, 0}
\newcommand{\revg}[1]{\textcolor{dgreen}{#1}}
\theoremstyle{thmstyleone}%
\newtheorem{theorem}{Theorem}%  meant for continuous numbers
\theoremstyle{thmstyletwo}%
\newtheorem{remark}{Remark}%
\theoremstyle{thmstylethree}%
\begin{document}

\AddToShipoutPictureFG{%
\put(0,20){
\hspace*{\dimexpr0.075\paperwidth\relax}%% change \dimexpr0.5\paperwidth\relax appropriately
\parbox{.84\paperwidth}{\footnotesize The Version of Record of this article is published in  Journal of Control, Automation and Electrical Systems, and is available online at   \url{https://doi.org/10.1007/s40313-023-01041-1}}%
}}

\title[Autonomous driving of trucks in off-road environment]{Autonomous driving of trucks in off-road environment}

\author*[1]{\fnm{Kenny A. Q.} \sur{Caldas}}\email{kennycaldas@usp.br}

\author[2]{\fnm{Filipe M.} \sur{Barbosa}}\email{filipe.barbosa@liu.se}
% \equalcont{These authors contributed equally to this work.}

\author[3]{\fnm{Junior A. R.} \sur{Silva}}\email{junior.anderson@usp.br}

\author[3]{\fnm{Tiago C.} \sur{Santos}}\email{tiagocs@icmc.usp.br}

\author[3]{\fnm{Iago P.} \sur{Gomes}}\email{iagogomes@usp.br}

\author[3]{\fnm{Luis A.} \sur{Rosero}}\email{lrosero@usp.br}

\author[3]{\fnm{Denis F.} \sur{Wolf}}\email{denis@icmc.usp.br}

\author[1]{\fnm{Valdir} \sur{Grassi Jr}}\email{vgrassi@usp.br}

\affil*[1]{\orgdiv{Department of Electrical and Computer Engineering}, \orgname{S\~ao Carlos School of Engineering, University of S\~ao Paulo}, \city{S\~ao Carlos}, \country{Brazil}}

\affil[2]{\orgdiv{Division of Automatic Control, Department of Electrical Engineering}, \orgname{Linköping University}, \city{Linköping}, \country{Sweden}}

\affil[3]{\orgdiv{Institute of Mathematics and Computer Science}, \orgname{University of S\~ao Paulo}, \city{S\~ao Carlos}, \country{Brazil}}

%%==================================%%
%% sample for unstructured abstract %%
%%==================================%%

\abstract{
Off-road driving operations can be a challenging environment for human conductors as they are subject to accidents, repetitive and tedious tasks, strong vibrations, which may affect their health in the long term. Therefore, they can benefit from a successful implementation of autonomous vehicle technology, improving safety, reducing labor costs and fuel consumption, and increasing operational efficiency. 
The main contribution of this paper is the experimental validation of a path tracking control strategy, composed of longitudinal and lateral controllers, on an off-road scenario with a fully-loaded heavy-duty truck.
The longitudinal control strategy relies on a Non-Linear Model Predictive Controller (NMPC), which considers the path geometry and simplified vehicle dynamics to compute a smooth and comfortable input velocity, without violating the imposed constraints. The lateral controller is based on a Robust Linear Quadratic Regulator (RLQR), which considers a vehicle model subject to parametric uncertainties to minimize its lateral displacement and heading error, as well as ensure stability. Experiments were carried out using a fully-loaded vehicle on unpaved roads in an open-pit mine. The truck followed the reference path within the imposed constraints, showing robustness and driving smoothness.
}
\keywords{Autonomous truck, Path tracking control, Robust control, Autonomous driving.}

\maketitle

%%===================================================%%
%%                    Introduction      %%
%%===================================================%%
\section{Introduction}
\label{sec:INTRODUCTION}

Autonomous truck in urban areas is still a major challenge in the field of intelligent transportation since it has to deal with many external factors, such as pedestrians, cyclists, other vehicles, traffic signals, and so on \citep{Ljungqvist2020}. On the other hand, off-road applications are more suitable to the initial deployment of this technology. In this context, mining is one of the sectors that can benefit from the implementation of autonomous trucks.

According to \cite{Kecojevic2004} and \cite{Zhang2014}, there are several truck-related accidents and fatalities in mining areas, due to loss of control, rollovers, reverse maneuvering and collision, to name a few. Moreover, drivers are subjected to repetitive and tedious tasks as well as strong vibrations, which can affect their health in the long term \citep{Lima2017,Xiao2020}. Therefore, the adoption of autonomous trucking in this context could improve safety, vehicle control, and reduce labor costs. From an economic standpoint, it is also possible to reduce fuel consumption and increase operational efficiency, since the daily operating time would be longer \citep{Slowik2018}.

This paper presents a lateral and longitudinal controllers of an autonomous truck for a GNSS-based driving system. The control strategy is the combination of our previous works,  a Nonlinear Model Predictive Controller (NMPC) \citep{Caldas2019} and a Robust Linear Quadratic Regulator (RLQR) \citep{Barbosa2019}. The former deals with the vehicle's longitudinal control, optimizing its speed subject to path, safety and stability constraints, while the latter deals with the vehicle's lateral control subject to mass uncertainty. Both aforementioned works  \citep{Caldas2019,Barbosa2019} were only evaluated on simulations. The main contribution of this paper is the experimental validation of the control strategy on an off-road scenario with a fully-loaded truck in order to evaluate its performance and limitations. The paper also briefly presents the system architecture in which the control strategy was implemented and tested.

This paper has the following structure: \autoref{sec:RELATED_WORK} presents other studies related to autonomous truck control strategies with their main contributions and results. \autoref{sec:architecture} presents an overview of the system architecture with an emphasis on the vehicle control main modules and a short description of their operations. The longitudinal and lateral control strategies are described in \autoref{sec:long_ctrl} and \ref{sec:lat_ctrl}, respectively. The experiments and obtained results are shown in \autoref{sec:exp_res} and subsequently discussed in \autoref{sec:discussion}. Lastly, conclusions are drawn in \autoref{sec:conclusion}.

%%===================================================%%
%%                    Related Workd      %%
%%===================================================%%
\section{Related work}
\label{sec:RELATED_WORK}

Autonomous navigation of trucks is a complex task in the field of intelligent transportation systems. In off-road scenarios, the dynamics of the heavy-duty truck are more susceptible to disturbances and modeling errors than light vehicles, due to its complexity. Consequently, the adopted control strategies must be robust enough to overcome them in the driving task. For that reason, experimental evaluations in off-road autonomous truck navigation systems are still very challenging.

Currently, the majority of state-of-the-art works focus on lateral and longitudinal controllers separately. A fuzzy-inference self-tuning steering control for heavy-duty vehicles under freight weight change was proposed by \cite{Ario2016} to avoid a decrease in the controller's performance when the system parameters are varying. In this case, the controller was evaluated only in simulation. Similarly, a Takagi-Sugeno fuzzy controller is implemented in a 13 tons truck to control the vehicle in different driving conditions by \cite{Rodriguez-Castano2016}. In the experiment, the vehicle followed a path with accentuated curves with a maximum track-error of 71 cm, where its velocity was controlled by a human supervisor.

Recent researches have used Model Predictive Control (MPC) for navigation. For instance, an algorithm for lateral control of an autonomous truck for smooth and accurate steering was presented by \cite{Lima2017}, where the main difference from a standard MPC approach is that the driving smoothness is considered in the cost function. An experiment to evaluate the controller's performance was performed at Scania's facilities in two different paths: a precision track and a high-speed test track, in order to resemble mining and highway scenarios, respectively. A later version of this work is presented by \cite{Lima2018}, where input rate constraints and a new approach for computing the terminal cost of the MPC controller are proposed and evaluated experimentally in a scenario that resembles an emergency maneuver. MPC has also been used for longitudinal control of heavy trucks, as proposed by \cite{Held2019}. In this work, the authors could reduce almost 7 \% of energy consumption in comparison with a normal driving pattern, in a simulated environment.

Differently from the aforementioned works, the solution presented in this paper considers both longitudinal and lateral controllers for the path tracking problem. Also, the presented system is evaluated in a real mining scenario with a fully-loaded vehicle, which to the best of our knowledge has not been previously addressed.

%%===================================================%%
%%                    Architecture     %%
%%===================================================%%
\section{System architecture}
\label{sec:architecture}

The heavy-duty truck used in the experiments is the Scania P-420 CB8x4ESZ, shown in \autoref{fig:scania}. This model has 12 gears, operating with a manual transmission at first gear and with an automatic transmission with the others, therefore, a human driver must accelerate and change to second gear before starting the experiments with the autonomous mode.
For safety reasons the truck was equipped with a panel with some switches to turn  the autonomous system on/off.

%--
\begin{figure}[htb]
	\centering
	\includegraphics[width=0.9\linewidth]{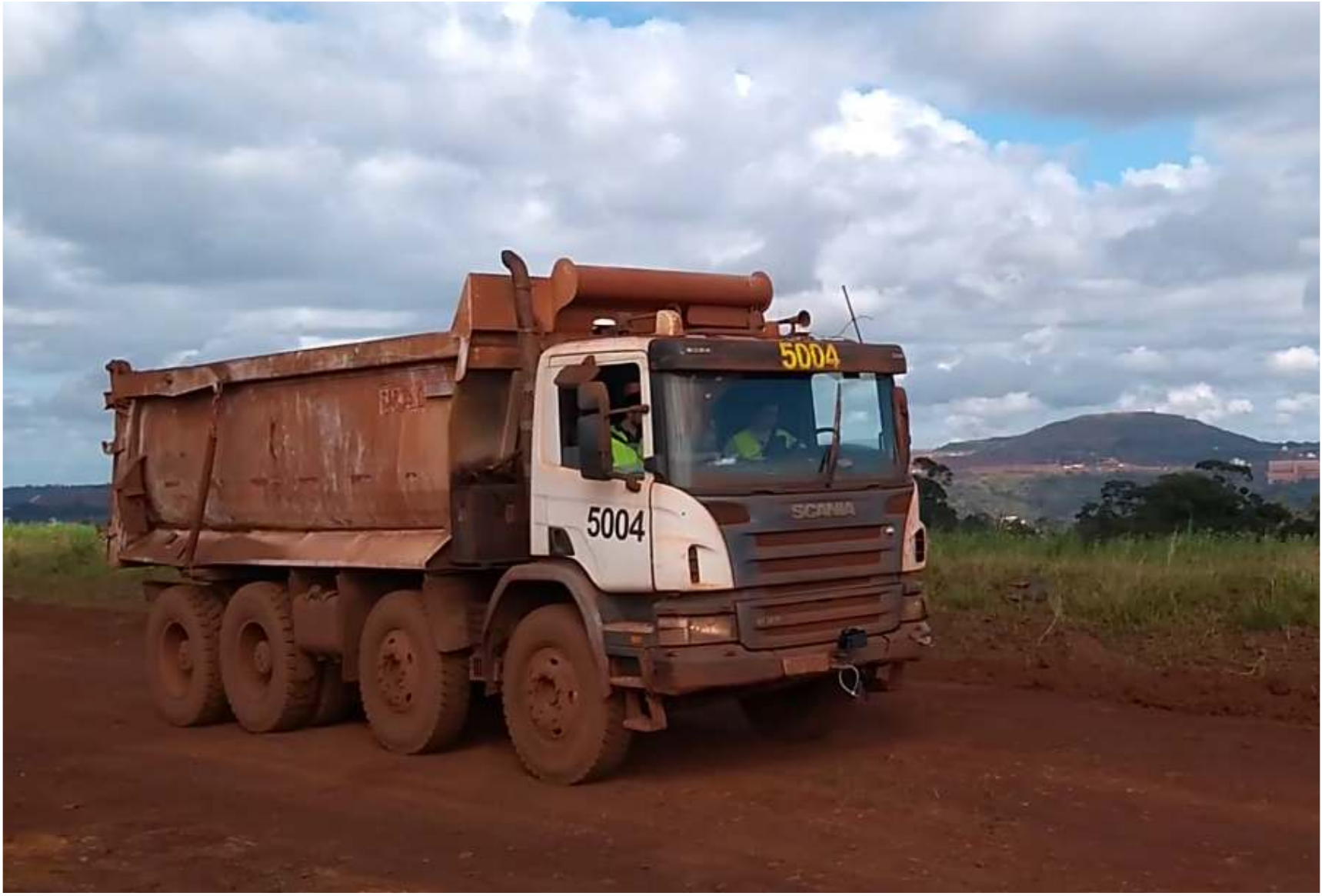}
	\caption{Scania P-420 CB8x4ESZ truck.}
	\label{fig:scania}
\end{figure}
%--

The system architecture used to autonomously control the vehicle is presented in \autoref{fig:architecture}. Furthermore, the developed software is based on the Robot Operating System (ROS) Kinetic, running on Ubuntu 16.04. Since the main focus of this paper is the high-level controllers, only a brief description of the other blocks will be given in this section. 

%--
\begin{figure*}[htb]
    \centering
    \includegraphics[width=0.9\linewidth]{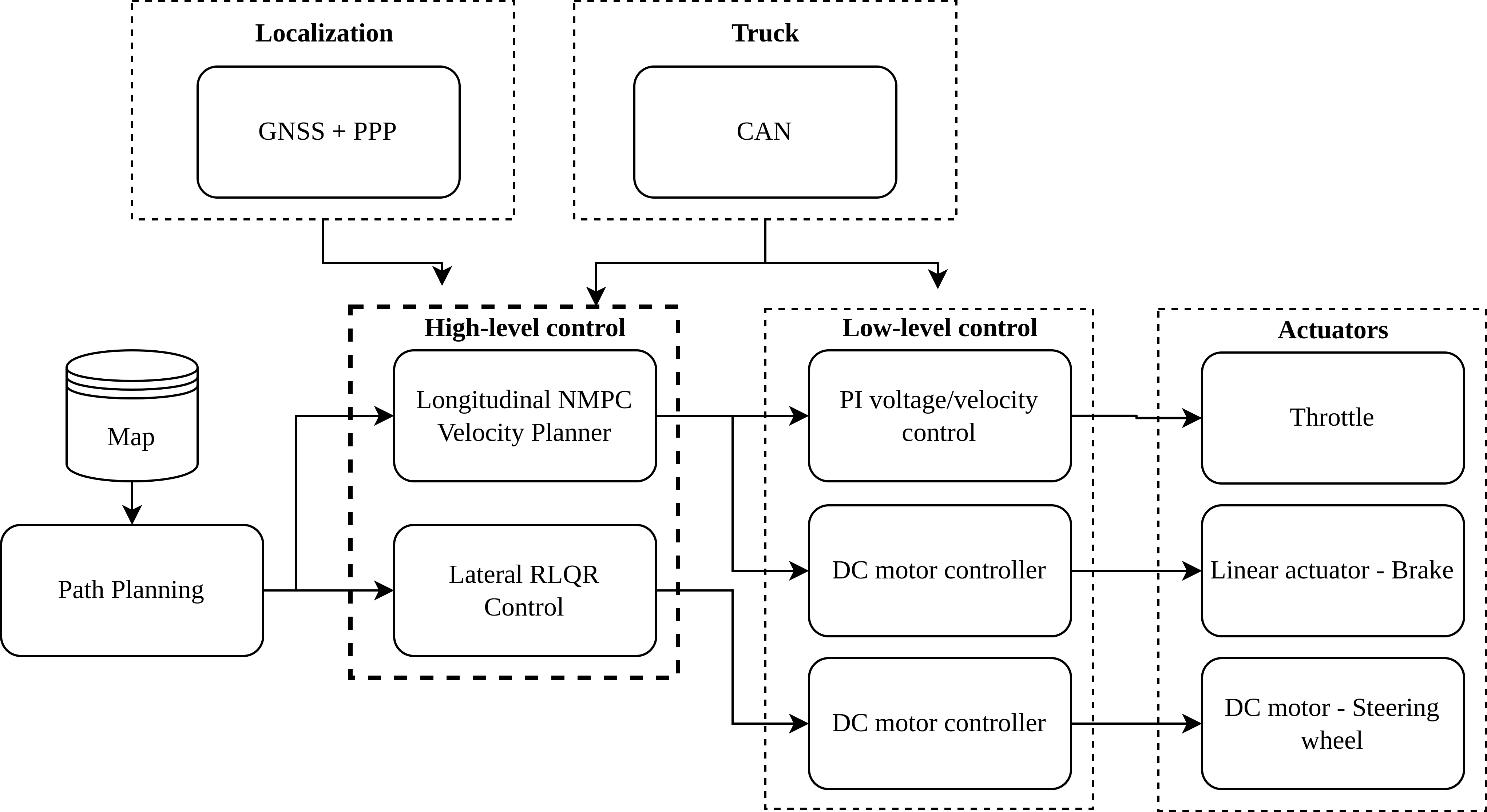}
    \caption{Autonomous truck system architecture.}
    \label{fig:architecture}
\end{figure*}
%--

\subsection{Localization}
\label{subsec:loc}

The autonomous system uses a Global Navigation Satellite System (GNSS) based localization system with Precise Point Positioning (PPP) correction that provides centimeter-level positioning accuracy. Although other experiments and research efforts have noticed that GNSS navigation alone can be very noisy, the precision of position estimation with the dedicated correction system and the conditions of the site where the experiments were conducted allowed navigation using waypoints (i.e., position and heading). Moreover, as an additional safety measure, a diagnostic system is used to monitor the signal quality of the GNSS receiver, which monitors the correction signal, number of satellites, and other parameters that ensure the quality of the information provided by the receiver.

The localization module runs at 10 Hz frequency and was responsible for obtaining geodetic  coordinates and converting it into Universal Transverse Mercator (UTM) projection, which is a Cartesian coordinate system. The receiver also provides data on heading direction, based on a two-antenna system.

\subsection{Map}
\label{subsec:mapping}

The map is composed of routes segments represented by sequences of waypoints recorded offline during previous manual driving. The information recorded for each waypoint includes the position, heading, and altitude, obtained by the GNSS-based localization system with PPP correction (see Section \ref{subsec:loc}). Each route segment is stored in a file for further access by the path planning.

\subsection{Truck}

The current state of the vehicle is estimated by a module that uses data from the CAN (Controller Area Network) bus and sensors coupled to the actuators. The state message carries information on velocity, current gear and engine speed, steering angle, and others. Moreover, the update frequency is 5 Hz, and this information feed all high-level and low-level controllers. 

\subsection{Actuators}\label{subsec:actuators}

There are three actuators in the proposed system: the brake, steering wheel, and the truck's throttle. A linear actuator adjusts the brake pedal position and a DC motor attached to the steering column adjusts the steering wheel angle, each one controlled by a separate DC motor controller. The throttle pedal position, ranging from $0-100\%$, sends a corresponding voltage signal to the truck’s electronic injection to accelerate the truck using measurements obtained from two potentiometers placed on the pedal. Thus, for autonomously controlling truck's throttle, this voltage signal is emulated. For this purpose, a microcontroller is used to control the truck’s velocity using its analog voltage output, accordingly to the vehicle's specifications.

\subsection{Low-level controller}

The low-level controllers are responsible for converting the reference setpoints, determined by the high-level controllers, into actuators' input signals. The longitudinal NMPC controller generates a desired reference velocity profile (see \autoref{sec:long_ctrl}). Whenever the current vehicle's velocity is below the desired reference, i.e. the vehicle needs to accelerate, a  PI low-level controller converts the reference velocity into a voltage reference for the microcontroller discussed in the previous subsection. Whenever the vehicle needs to decelerate, the desired braking signal is set as reference for the DC motor controller responsible for acting on the linear actuator attached to the brake pedal. Another DC motor controller is utilized to set the steering wheel to the desired steering angle defined as reference by the lateral RLQR controller (see \autoref{sec:lat_ctrl}).

\subsection{Path planning}

In order to execute a navigation task, the path planning system creates a reference path by selecting and concatenating the waypoints of corresponding routes segments (i.e., files).
These waypoints were recorded using a GNSS-based localization system with PPP correction, which provides centimeter-level of accuracy. However, this
data can still be noisy enough to cause abrupt changes in the position, orientation, or curvature of the road segment, which can lead to dangerous situations or uncomfortable driving experiences (e.g., steering wheel oscillation). Thus, the original sequence of waypoints must be properly processed and smoothed.

Smooth paths that resemble the original reference can be computed by interpolating curves such as cubic splines, B-splines, B\'ezier and clothoid curves. These are commonly applied to road shape modeling, since they are smooth and can be parameterized by control points \citep{Gonzalez_2016}. In this work, we use clothoids to represent the reference path. Clothoids are curves whose curvature varies linearly with respect to their arc lengths, providing a constant curvature rate change \citep{Silva_2018}, which makes them suitable for railways and highways design \citep{Aashto_2001}. In addition, this curve can be parameterized by its curvature, taking into account nonholonomic constraints and comfort level.
For this reason, the path planning module processed the noisy waypoints previously recorded to create  piecewise linear continuous-curvature paths composed of clothoids according to the methodology developed by \cite{Silva2020}.

%%===================================================%%
%%                    Long Control     %%
%%===================================================%%
\section{Longitudinal NMPC velocity planner}
\label{sec:long_ctrl}

The main advantages of the NMPC strategy are the possibility of optimizing the vehicle's speed based on the physical characteristics of the path, such as altitude and curvature, while satisfying constraints on the state and control in the solution space. As illustrated in \autoref{fig:architecture}, the information collected from the CAN bus provides the vehicle speed in meters per second (m/s), engine speed in Rotations Per Minute (RPM) and current gear. The global position of the vehicle is updated with the GNSS sensor, to find the truck's position with respect to the path map.

The solution of the NMPC optimization problem is solved by using the C/GMRES algorithm, which is a combination of the continuation method with a linear equation solver called generalized minimum residual (GMRES) method \citep{ohtsuka2004}. Instead of using computationally expensive methods such as Riccati differential equations, the system’s optimal trajectory can be calculated using a Hamiltonian function $H$, composed of the state equation $f(x,u)$, cost functions $J(x,u)$ and equality constraints $C(x,u) \in \mathbb{R}^{m_c}$, defined by
\begin{align}
    H(x,\lambda,u,\mu) = J(x,u) + \lambda^T f(x,u) + \mu^T C(x,u),
\end{align}
where $x(t) \in \mathbb{R}^n$ is the state vector and $u(t) \in \mathbb{R}^{m_u}$ the control input vector. $\lambda$ is the costate and $\mu$ the Lagrange multiplier associated with the equality constraints.

In this formulation, the necessary first-order conditions for optimality are calculated as
\begin{align}
    &x^*_\tau = f(x^*,u^*), \\
    &x^*_0(t) = x(t), \\
    &\lambda^*_\tau = - H_x^T(x^*, \lambda^*, u^*, \mu^*), \\
    &H_u(x^*, \lambda^*, u^*, \mu^*) = 0, \label{eq:cond_hu} \\
    &C(x^*, u^*) = 0, \label{eq:cond_constraint}
\end{align}
where the subscript $\tau$ represents a partial differentiation with respect to a fictitious time $\tau$. Thus, it is possible to define a vector $U(t)$ composed of the control input vectors and the Lagrange multipliers over the prediction horizon as
\begin{align}
    U(T) = \left[ u_0^{*T}(t), \mu_0^{*T}(t), \ldots, u_{N-1}^{*T}(t), \mu_{N-1}^{*T}(t) \right]^T \in \mathbb{R}^{mN},
\end{align}
where size $m = m_u + m_c$. Hence, a $mN$-dimensional system can be defined to combine the optimality conditions in a single equation, given by
\begin{align}
    F(U(t), x(t)) = 
    \begin{bmatrix}
        H_u^T(x_0^{*}, \lambda_1^{*}, u_0^{*}, \mu_0^{*}) \\
        C(x_0^{*}, u_0^{*}) \\
        \vdots \\
        H_u^T(x_{N-1}^{*}, \lambda_N^{*}, u_{N-1}^{*}, \mu_{N-1}^{*}) \\
        C(x_{N-1}^{*}, u_{N-1}^{*}) \\
    \end{bmatrix} 
    = 0.
\end{align}
Therefore, the control input can be calculated by
\begin{align} \label{eq:ControlCalc}
     \dot{U} = F_{U}^{-1}(-F_{x}-F_{t}- \zeta F),
\end{align}
where $\zeta$ is a constant scalar. See the paper of \cite{ohtsuka2004} for more details on this method. 

\subsection{Definitions and vehicle longitudinal dynamics}

The longitudinal states variables of the truck are described as
\begin{align}\label{eq:StateVector}
   x(t) = [s(t),v(t)]^{T},
\end{align}
where $s(t)$ is the traveled distance and $v(t)$ the vehicle velocity. The traction force $F_T(t)$, given in Newton per kilogram (N/kg) is defined as the control input by
\begin{align}
  F_{T}(t) &= m_1u(t), \label{eq:TractionForce}
\end{align}
where positive and negative values correspond to throttle and brake pedals actuation, respectively.

The acting forces on the truck are illustrated in \autoref{fig:DynamicModel}. The resistance forces are given by $F_{grav}$, $F_{drag}$ and $F_{roll}$, which are the gravitation, the aerodynamic drag and rolling resistance force, respectively. The summation of all forces is represented as
\begin{align}\label{eq:Forces}
    m_1\dot{v}(t) = F_{T}(t)-F_{roll}(t)-F_{drag}(t)-F_{grav}(t),
\end{align}
with
\begin{subequations}\label{eq:ResForces}
    \begin{align}
        F_{roll}(t) &= C_{rr}(v) m_1g \cos(\beta(s)), \label{eq:RollForce} \\
        F_{grav}(t) &= m_1g \sin(\beta(s)),\label{eq:GravForce} \\
        F_{drag}(t) &= \frac{1}{2} C_{D} \sigma A_{f} v^{2},\label{eq:DragForce}
    \end{align}%
\end{subequations}
where $m_1$ is the truck mass, $C_{D}$ the aerodynamic drag coefficient, $\sigma$ the air density, $A_{f}$ the frontal area of the vehicle and $\beta(s)$ the road slope angle at position $s$. The rolling resistance coefficient is given by
\begin{align}\label{eq:RollCoef}
    Crr(v) = 0.01(1+\revg{\mid v\mid}/576).
\end{align}
\begin{figure}[ht]
    \centering
    \includegraphics[width=0.6\linewidth]{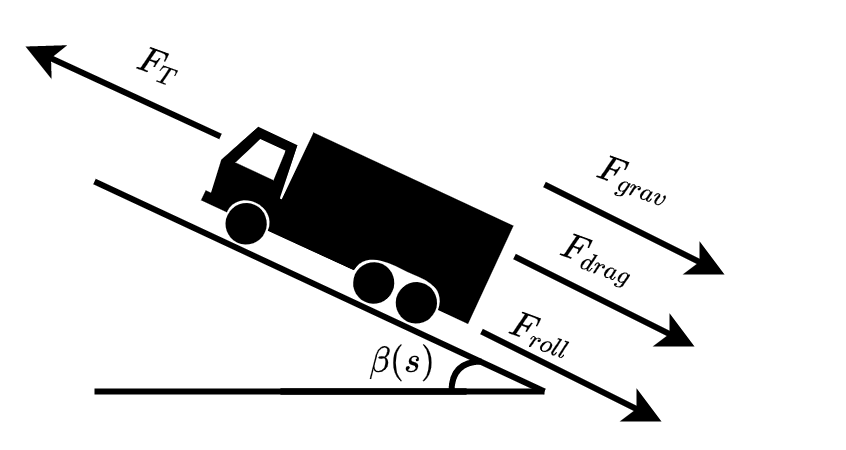}
    \caption{\label{fig:DynamicModel} Acting forces on the vehicle.}
\end{figure}

Thus, the state equation of the system can be defined as
\begin{align}
    f(x,u) = 
    \begin{bmatrix}
    \dot{s}\\
    \dot{v} \end{bmatrix} = 
        \begin{bmatrix}
            v \\
            u - \frac{1}{2m_1}C_{D} \rho A_{f} v^{2} - g \sin(\beta(s)) - C_{rr}(v) g \cos(\beta(s))
        \end{bmatrix},
\end{align}
where the vehicle acceleration is represented by $\dot{v} = a$ later in this work. We highlight that we only consider the forward motion of the truck for the state space representation and the NMPC control design.

The path slope and curvature information can be obtained from the path planning module. This is vital to ensure a safe and stable trip during the velocity planning step. The slope angle, given in radians, can be calculated as
\begin{align}\label{eq:RoadGrade}
 \beta(s) = \tan^{-1} \left( \frac{R_{alt}(s+\Delta_s)-R_{alt}(s-\Delta_s)}{2\Delta_s} \right).
\end{align} 
A user-defined distance $\Delta s$ is used to approximate the slope angle, considering the difference in altitude measured in two different coordinates. The truck’s current altitude is given by $R_{alt}$. Observe that when $\beta(s)$ is negative, i.e. the truck is moving downhill, the sine in \eqref{eq:GravForce} will be negative, hence, the gravitational resistance force will increase the traction force $F_T(t)$ instead of subtracting.

In curves, the truck is very susceptible to lateral acceleration depending on the path curvature and its current velocity. Before entering an accentuated curve section, the truck should slow down to maintain the vehicle’s stability. Thus, the lateral acceleration can be estimated by
\begin{align}\label{eq:LateralAcceleration}
    a_{lat} = v^{2}f_{curv}(s).
\end{align}
The path absolute curvature profile, at any position $s$, can be obtained as
\begin{align}\label{eq:CurvatureFunction}
    f_{curv}(s) = \abs{\frac{1}{R_{curv}(s)}},
\end{align}
where $R_{curv}$ is the radius of the curve.

The engine speed (RPM), is another important factor required to estimate the future states of the vehicle. Based on the current gear and vehicle speed, it is possible to calculate $\omega_{e}(t)$ as
\begin{align}\label{eq:EngSpeed}
	\omega_{e}(t) = \min \left( \max \left( \frac{1000v(t)\xi}{120 \pi r (1-i_{slip})},\omega_{idle} \right), \omega_{red} \right),
\end{align}
where $\xi$, $r$, $\omega_{idle}$, $\omega_{red} $ and $i_{slip}$ are the current gear ratio, wheel radius, engine speed at idle, redline engine speed and tire slippage percentage, respectively.

The vehicle's engine power in kW required to overcome the resistance forces, based on its current speed, is given by 
\begin{align}\label{eq:VehiclePower}
    P_{eng}(t) = \left( \frac{F_{res}(t) + m_1a(t)(1.04+0.0025\xi(t)^{2})}{3600\eta_{d}} \cdot v(t)\right).
\end{align}
The moment of inertia of the vehicle at different gear ratios $\xi$ is represented by the term $(1.04+0.0025\xi(t)^{2})$ and $\eta_{d}$ is the driveline efficiency \citep{wong2001}.

\subsection{Constraints}\label{subsec:constraints}

In order to avoid slippage during acceleration or braking, the input constraint given by 
\begin{align}\label{eq:IneqConstraint}
    -u_{max} \leq u \leq u_{max}, \quad u_{max} = \frac{\gamma M_{ta}g}{m_1},
\end{align}
is considered, where $M_{ta}$ is the mass of the propulsive axle in $kg$ and $\gamma$ the road adhesion. This inequality is based on the maximum frictional force that can be sustained between the vehicle's wheels on the propulsive axle, and the road surface \citep{SimpleModel}. Therefore, the slippage can be estimated based on the NMPC control input and the truck masses.

The inequality \eqref{eq:IneqConstraint} can be rewritten as an equality constraint using a heuristic treatment called Auxiliary Variable Method by introducing a slack input $u_{slk}$ as follows \citep{ohtsuka2004}:
\begin{align*}
    -u_{max} \leq &u \leq u_{max}  \\
    &\Downarrow \\
    u^2 &\leq u^2_{max} \\
    u^2 - u^2_{max} &\leq 0 \\
    u^2 + u^{2}_{slk} - u^2_{max} &= 0.
\end{align*}
By dividing everything by 2 to improve numerical stability during the Hamiltonian calculations, the equality constraint is defined as
\begin{align}\label{eq:EqConstraint}
    C(x(t),u(t)) =  \frac{(u^{2}+u_{slk}^{2}-u_{max}^{2})}{2}=0.
\end{align}

The slack input $u_{slk}$ is squared to prevent the sign of $u^{2}-u_{max}^{2}$ to be nonpositive. This step makes possible to rewrite the inequality constraint defined in \eqref{eq:IneqConstraint} as an equality constraint in accordance with the definition in \eqref{eq:cond_constraint}. The structure of the C/GMRES algorithm requires an equality constraint, so this step is necessary in order to avoid modifying the algorithm, making it applicable to  broader range of control problems.

\subsection{Cost function}\label{sc:CostFunction}

The cost function $J_{long}$ defined for the longitudinal control of the truck is given by
\begin{align}\label{eq:CostFunctionL}
    J_{long}(x,u) = J_{1} + J_{2} + J_{3} + J_{4} + J_{5} - J_{slk}.
\end{align}

In order to optimize the truck's speed in up-down slope sections and to maintain a reference value, the terms $J_1$ and $J_2$ are defined as follows
\begin{align}\label{eq:Cost1}
    J_{1} = w_{1}\left[ \frac{1}{2} \big( a(t) + g \cdot sin(\beta(s)) \big)^{2} \right]
\end{align}
\begin{align}\label{eq:Cost2}
    J_{2} = w_{2}\left[ \frac{1}{2} \left( v(t) - v_{Ref} \right)^{2} \right].
\end{align}

It is also important to penalize high accelerations during curves to improve safety, thus, $J_3$ is defined as
\begin{align}\label{eq:Cost4}
    J_{3} = e^{w_{3}(v^{2}(t) f_{curv}(s) - a_{lat_{max}})}.
\end{align}
Another important safety measure is to ensure the truck’s speed is under the path limit. This is possible by penalizing when $v$ is above ($V_{lim}$) as follows
\begin{align}\label{eq:Cost5}
    J_{4} = e^{w_{4}(v(t) - v_{lim})}.
\end{align}

Lastly, \eqref{eq:Cost6} is defined to reduce the fuel consumption at a steady condition, as previously shown by \cite{Caldas2019}, as follows
\begin{align}\label{eq:Cost6}
    J_{5} = e^{ w_{5} P_{eng}(t)}.
\end{align}

The cost function is subtracted by 
\begin{align}\label{eq:CostSlk}
    J_{slk} = w_{slk}u_{slk}
\end{align}
to avoid singularities when calculating the control input. Since the sign of $u_{slk}$ does not affect the optimality condition in \eqref{eq:EqConstraint} because it is squared, the solution of the optimization problem can bifurcate and the C/GMRES algorithm cannot determine the update of $u_{slk}$, which occurs when $u_{slk} = 0$. Thus, by multiplying $u_{slk}$ by a small positive constant $w_{slk}$, the value of $u_{slk}$ must be nonzero in order to meet the optimality condition defined in \eqref{eq:cond_hu} ($H_{u_{slk}} = \mu u_{slk} - w_{slk} = 0$, where $\mu$ is the Lagrange multiplier). This approach can be viewed as an interior penalty method, as detailed by \cite{gros_numerical}, and it is a necessary step after using the Auxiliary Variable Method described in Section \ref{subsec:constraints} to convert inequality constraints into equality constraints. As long as the optimality conditions and the constraints are met, the algorithm can find a solution to the optimization problem, where the complexity of the computation is related to the selection of the interior penalty weight $w_{slk}$ \citep{huang_nonlinear_2015}. The selection of initial guesses of the control input $u$ and $\mu$, called warm start, is also vital to obtain an accurate solution to the optimization problem \citep{ohtsuka2004}.

The weighting constants $w_{i}, i=1\ldots5$ were empirically adjusted after numerous simulations experiments presented by \cite{Caldas2019}, and are used to correctly balance the magnitude cost of each term of the cost function.

%%===================================================%%
%%                    Lat Control     %%
%%===================================================%%
\section{Lateral control strategy}
\label{sec:lat_ctrl}

To make the vehicle follow a desired given path, it is necessary to minimize the lateral displacement and heading error as well as ensure its stability. In order to do this, a system model that takes into account both path following and dynamic variables was adopted. This model was then used in a Robust Linear Quadratic Regulator (RLQR) based approach, generating an optimal feedback law. Finally, the generated feedback law was applied on the physical steering system. Furthermore, this task requires information on position, heading and velocity, which was provided by the GNSS and CAN bus systems. More details on the system architecture were given in \autoref{sec:architecture}.

This section briefly describes the lateral model and the robust regulator that was used in the vehicle's lateral control.

\subsection{Path-following model}
\label{path-following-model}

Solving the path-following problem consists of reducing both the lateral displacement and heading errors. To this end, a model based on the equations presented by \cite{Skjetne} was adopted. \autoref{path-following} illustrates the path-following problem for the heavy-duty vehicle. The lateral displacement is defined as the distance $\rho$ from the vehicle's center of gravity to the closed point $D$ on the desired path. The heading error is defined as $\theta = \psi_{1}-\psi_{des}$, where $\psi_{1}$ and  $\psi_{des}$ are respectively the current and desired angles.
%--
\begin{figure}[H]
    \centering
    \label{path-following-single}
    \includegraphics[width=0.65\linewidth]{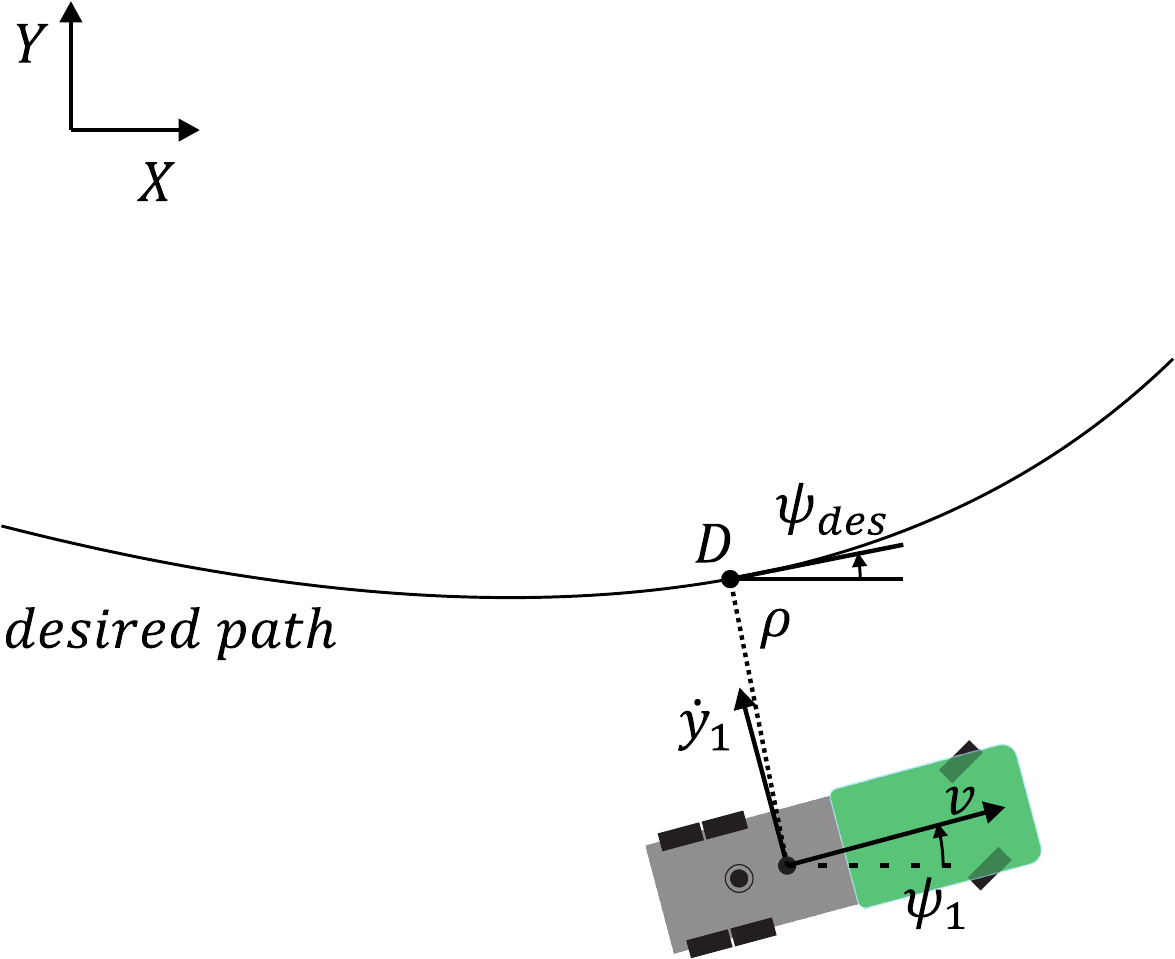}
    \caption{Path-following problem}
    \label{path-following}
\end{figure}
%--

Thus, the path-following model of the vehicle is expressed as
%--
\begin{equation}
\label{path-following-equation-1}
\begin{aligned}
\dot{\rho} &= v\sin\theta+\dot{y}_{1}\cos\theta\\
\dot{\theta} &= \dot{\psi},
\end{aligned}
\end{equation}
%--
where $\dot{y}_{1}$ and $v$ are respectively the vehicle's lateral and forward velocity. Then, by assuming a small heading error $\theta$, the lateral displacement $\rho$ can be rewritten in the linear form as
%--
\begin{equation}
\label{path-following-equation-2}
\dot{\rho} = v\theta+\dot{y}_{1}.
\end{equation}
%--

\subsection{The vehicle lateral model}
\label{lateral_model}

A single-track model was used to describe the dynamics of the vehicle's planar movement, neglecting roll and pitch effects. In other words, the vehicle is described by an equivalent track in each axle, linked by its wheelbase. Thus, the nonholonomic linear model presented by \cite{vandeMolengraftLuijten2012} was used in this work.

\autoref{single} shows the free body diagram of the heavy-duty vehicle, and \autoref{vehicles-parameters} details its parameters. Furthemore, the following assumptions are made:
%--
\begin{itemize}
\item The differences between left and right track are neglected;
\item The mass is assumed to be concentrated at the center of gravity;
\item Lateral tire forces are proportional to the tire slip angles;
\item There is no load transfer.
\end{itemize}
%--
%--
\begin{figure}[h]
	 \begin{center}
	    \includegraphics[width=0.5\linewidth]{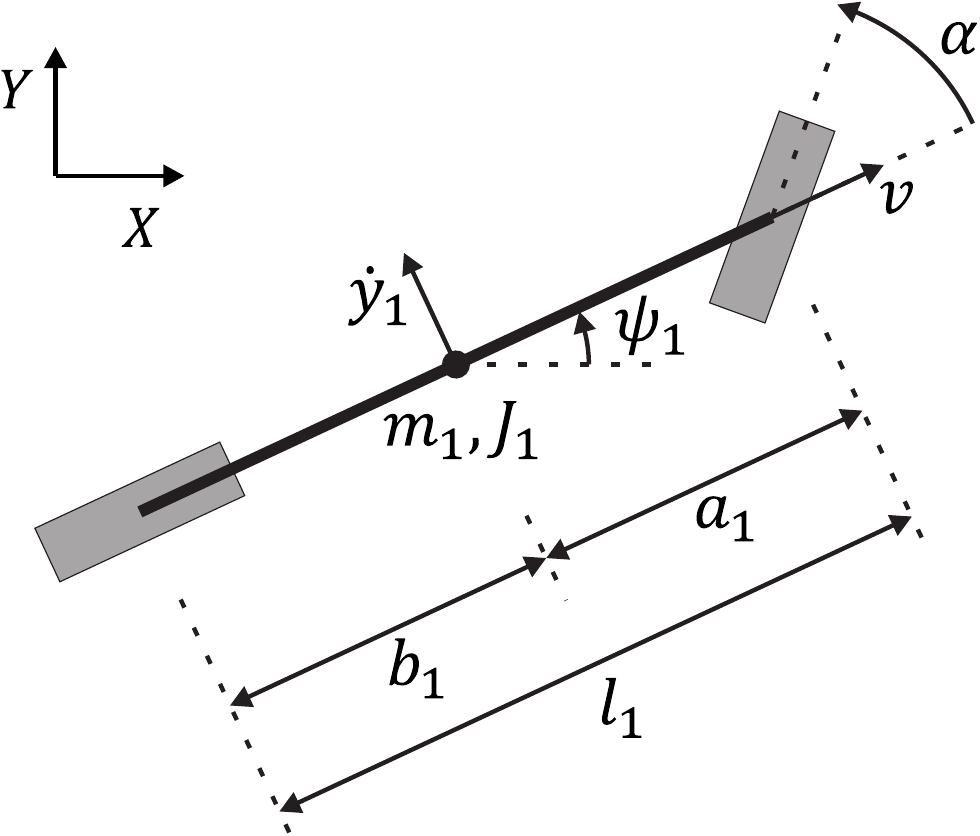}
	    \end{center}
	    \caption{Single-track model for a non-articulated vehicle.}
	    \label{single}
\end{figure}
\begin{table}[t]
\caption{Description of the vehicle parameters}
\label{vehicles-parameters}
\begin{center}
\resizebox{\columnwidth}{!}{
\begin{tabular}{c l c}
\hline
Parameter & Meaning & Unit\\
\hline
$a_{1}$ & Distance from the front axle to the CG on first unit & $m$\\
$b_{1}$ & Distance from the rear axle to the CG on first unit & $m$\\
$l_{1}$ & First unit wheelbase & $m$\\
$v$ & Forward velocity & $m/s$\\
$\dot{y}_{1}$ & Lateral Velocity & $m/s$\\
$m_{1}$ & First unit mass & $kg$\\
$J_{1}$ & First unit moment of inertia & $kg \hspace{1mm} m^{2}$\\
$\psi_{1}$ & First unit yaw & $rad$\\
$\alpha$ & Steering angle & $rad$\\
\hline
\end{tabular}}
\end{center}
\end{table}
%--
In addition to eliminate the lateral displacement  $\rho$ and heading error $\theta$, the vehicle's stability must be guaranteed. Hence, the lateral velocity $\dot{y}_{1}$ and yaw rate $\psi_{1}$ need to be controlled.

To this end, the equations of motion of the heavy-duty vehicle are expressed as
%--
\begin{equation}
    \label{eq:estate-stape}
    M\dot{x}(t)=A(t)x(t)+B\alpha(t),
\end{equation}
%--
with
%--
\begin{equation}
\label{eq:ssdynamic1}
\begin{aligned}
M=&\begin{bmatrix}
m_{1} & 0 & 0 & 0\\
0 & J_{1} & 0 & 0\\
0 & 0 & 1 & 0\\
0 & 0 & 0 & 1
\end{bmatrix}, \,\\
A(t)&=
\begin{bmatrix}
\frac{-c_{1}-c_{2}}{v(t)} & \frac{b_{1}c_{1}-a_{1}c_{1}-m_{1}v^{2}(t)}{v(t)} & 0 & 0\\
\frac{b_{1}c_{2}-a_{1}c_{1}}{v(t)} & \frac{-a_{1}^{2}c_{1}-b_{1}^{2}c_{1}}{v(t)} & 0 & 0\\
1 & 0 & 0 & v(t)\\
0 & 1 & 0 & 0
\end{bmatrix} \text{ and} \,
B=
\begin{bmatrix}
c_{1}\\
a_{1}c_{1}\\
0\\
0
\end{bmatrix},
\end{aligned}
\end{equation}
%--
where $M$ is the inertia matrix, $A(t)$ is the state matrix, $B$ is the input matrix and $\alpha$ is the control input, here, the steering angle. The state vector is defined as $x=[\dot{y}_{1},\dot{\psi}_{1}, \rho,\theta]^T$,
and $c_{1}$ and $c_{2}$ are the cornering stiffness of the front and rear axes of the vehicle, respectively.

As cited by \cite{luijten2010lateral}, \cite{Fancher1989} demonstrated that the relationship between the vertical load forces and tire cornering stiffness are approximately linear for heavy-duty vehicles. Therefore, $c_{j}$ escalates linearly with the vertical load force of the axle $F_{z_{j}}$ according to
%--
\begin{equation}
\label{cornering-stiffnes}
c_{j} = f_{j}F_{z_{j}} \text{ with } j = 1,\hdots,p,
\end{equation}
%--
where a normalized cornering stiffness $f_{i}$ is used and $p$ corresponds to the number of axles represented in the vehicle model, e.g. $j = 1$ corresponds to the front axle and $j = 2$ to the rear axle. Thus, the vertical force in each axle is calculated as
%--
\begin{equation}
\label{vertical-force}
\begin{aligned}
F_{z_{1}} &= m_{1}g\frac{b_{1}}{l_{1}},\\
F_{z_{2}} &= m_{1}g\frac{a_{1}}{l_{1}},
\end{aligned}
\end{equation}
where $g$ is the gravitational acceleration. Furthermore, \cite{houben2008analysis} observed that the normalized cornering stiffness of both tracks is approximately the same, thus $f_{1} \approx f_{2}$.

Finally, to be able to use a RLQR control approach, the state-space equation in (\ref{eq:estate-stape}) must be put in a suitable form. Hence, it is rewritten as
%--
\begin{equation}
    \label{eq:estate-space_rewritten}
    \dot{x}(t)=F(t)x(t)+G\alpha(t),
\end{equation}
where 
\begin{equation*}
    F(t) = M^{-1}A(t) \text{ and } G = M^{-1}B.
\end{equation*}
%--

With this in mind, the model in (\ref{eq:estate-space_rewritten}) is discretized and the RLQR is used as described next.

\subsection{The robust lateral controller}
\label{regulator}

The vehicle's lateral control was done by using the RLQR approach presented by \cite{Terra2014} and \cite{Cerri2009}. As aforementioned, the aim is to make the heavy-duty vehicle follow a desired path in the presence of parametric uncertainties. Thus, a feedback gain $K_k$ is computed, giving the control inputs in the form $\alpha_k=K_k x_k$. These control inputs minimize a quadratic cost function while keeping stability in the presence of parametric uncertainties. This way, the optimization problem can be formulated as 
%--
\begin{equation}
\label{eq:minmaxcost}
\underset{x_{k+1},\alpha_{k}}{min} \ \underset{{\delta} F,{\delta} G}{max} {\bar{J}^{\mu}_{k}(x_{k+1},\alpha_{k},{\delta} F,{\delta} G)},
\end{equation}
%--
where the uncertainties in the state and input matrices are modeled as
%--
\begin{equation}
\label{eq:uncertainties}
\begin{bmatrix}
\delta F & \delta G
\end{bmatrix} = H \Delta \begin{bmatrix}
E_{F} & E_{G}
\end{bmatrix},
\end{equation}
%--
with $\Delta$ an arbitrary contraction matrix such that $||\Delta|| \leq 1$, and $H$, $E_{F}$ and $E_{G}$ are known matrices obtained in similar fashion as presented by \cite{Barbosa2019}. See \autoref{sec:uncertainties}. Furthermore, \cite{deMorais2022} presented a method to obtain these matrices via multiobjective optimization, which could be an alternative to the analytical method used here.

Then, using the model obtained from (\ref{eq:estate-stape}), (\ref{eq:ssdynamic1}) and (\ref{eq:estate-space_rewritten}), and the  criterion (\ref{eq:minmaxcost}), the vehicle's lateral controller is implemented as the RLQR in \autoref{rlqr_algor}. Just as in the standard LQR formulation, $Q \succ 0$ and $R \succ 0$ are respectively the weighting matrices for the states and inputs, and $P \succ 0$ the solution of the associated algebraic Riccati equation. The closed loop matrix $L$, the feedback gain $K$ and $P$ are obtained from (\ref{eq:framework})  in \autoref{rlqr_algor}, and the control inputs are subsequently obtained from the feedback law. The algorithm is executed until the vehicle reaches the last waypoint $\chi_f$. A brief description of the RLQR is given in the \autoref{RLQR_description} and we refer to the papers of \cite{Terra2014} and \cite{Cerri2009} for more details.

%---------------------------------------------------------------------------------------------------------------------------------------
\begin{algorithm*}[h]
        \begin{algorithmic}[1]
        \Require  $x_{0}$ and $P_{0}\succ{0}$
        \While{$\chi_k \neq \chi_f$}
           \State \begin{equation}
           \label{eq:framework}
               \begin{bmatrix}
	           L_{k}\\
	           K_{k}\\
	           P_{k+1}
	\end{bmatrix}=
	\begin{bmatrix}
	0 & 0 & 0 \\
	0 & 0 & 0 \\
	0 & 0 & -I \\
	0 & 0 & F_{k} \\
	0 & 0 & E_{F} \\
	I & 0 & 0 \\
	0 & I & 0
	\end{bmatrix}^{T}
	\begin{bmatrix}
	P_{k}^{-1}& 0 & 0 & 0 & 0 & I & 0\\
	0 & R^{-1} & 0 & 0 & 0 & 0 & I\\
	0 & 0 & Q^{-1} & 0 & 0 & 0 & 0\\
	0 & 0 & 0 & 0 & 0 & I & -G\\
	0 & 0 & 0 & 0 & 0 & 0 & -E_{G}\\
	I & 0 & 0 & I & 0 & 0 & 0\\
	0 & I & 0 & -G^{T} & -E_{G}^{T} & 0 & 0
	\end{bmatrix}^{-1}
	\begin{bmatrix}
	0\\
	0\\
	-I\\
	F_{k}\\
	E_{F}\\
	0\\
	0
	\end{bmatrix}
        \end{equation}
        \State \begin{equation*}
        \alpha_{k}	= K_{k}x_{k}.
        \end{equation*}
        \EndWhile
        \end{algorithmic}
	\caption{The implemented Robust Linear Quadratic Regulator}
	\label{rlqr_algor}
\end{algorithm*}

%---------------------------------------------------------------------------------------------------------------------------------------

\begin{remark}
    Note that there is a major difference between \autoref{rlqr_algor} and \autoref{rlqr_algor_ap} (in \autoref{RLQR_description}). In the latter, all the closed loop matrices, feedback gains and the solution of the Riccati equation are obtained in a \textit{backward} step with $k=N-1,\ldots,0$, where $N$ is the horizon. These results are then used to obtain the optimal control inputs $u_{k}^{*}$ and the regularized state vectors $x_{k+1}^{*}$ in a \textit{forward} step with $k=0,...,N-1$. This procedure is however not suitable for our application since the $N$ is unknown a priori. Thus, in the former a \textit{only-forward} procedure is used and the control inputs $\alpha$ are calculated until the vehicle arrives at $\chi_f$. This is possible because the regulator converges much faster than the variations in $A$ are perceived.
\end{remark}

%%===================================================%%
%%                    Results      %%
%%===================================================%%
\section{Experiments and results}
\label{sec:exp_res}

This section first brings the parameters used for both longitudinal and lateral control designs, and subsequently, shows the results obtained in a mining environment.

\subsection{Vehicle parameters}

\autoref{tab:parameters-values-non-articulated} shows the vehicle's parameters, necessary for the lateral and longitudinal controller design. This information may be obtained from technical specifications of cargo transportation vehicles\footnote{https://www.scania.com} and tipper implement\footnote{http://www.librelato.com.br} manufacturers. Similarly to what was done by \cite{Barbosa2019}, the normalized cornering stiffness was calculated assuming $f_{1} = f_{2} = 5.73\hspace{1mm}rad^{-1}$. 

%--
\begin{table}[htb]
\caption{Description of the non-articulated vehicle parameters}
\label{tab:parameters-values-non-articulated}
\begin{center}
\resizebox{\columnwidth}{!}{%
{\begin{tabular}{l l l}
\hline
Description & Parameter & Value\\
\hline
Distance from the front axle to the CG & $a_{1}$ & $3.19\hspace{1mm}m$\\
Distance from the rear axle to the CG & $b_{1}$ & $1.62\hspace{1mm}m$\\
Vehicle wheelbase & $l_{1}$ & $4.81\hspace{1mm}m$\\
Vehicle mass & $m_{1}$ & $16030\hspace{1mm}kg$\\
Payload mass  & $m_{2}$ & $35000\hspace{1mm}kg$\\
Vehicle moment of inertia & $J_{1}$ & $215717\hspace{1mm}kg \hspace{1mm} m^{2}$\\
Front axle cornering stiffness & $c_{1}$ & $540419\hspace{1mm}N/rad$ \\
Rear axle cornering stiffness & $c_{2}$ & $1064462\hspace{1mm}N/rad$\\
\multirow{12}{*}{Gear ratios} & $\xi_1$  & 11.32 \\
                              & $\xi_2$  & 9.164 \\
                              & $\xi_3$  & 7.194 \\
                              & $\xi_4$  & 5.823 \\
                              & $\xi_5$  & 4.632 \\
                              & $\xi_6$  & 3.750 \\
                              & $\xi_7$  & 3.019 \\
                              & $\xi_8$  & 2.444 \\
                              & $\xi_9$  & 1.918 \\
                              & $\xi_{10}$ & 1.553 \\
                              & $\xi_{11}$ & 1.235 \\
                              & $\xi_{12}$ & 1.000 \\
\hline
\end{tabular}}
}
\end{center}
\end{table}
%--

\subsection{Control parameters}

The NMPC longitudinal velocity planner parameters are presented in \autoref{table:NMPCPar_exp}, adjusted according to the vehicle's current weight and surface condition of the route, considering an initial value for reference speed is set as 40 km/h.
%-
\begin{table}[ht]
\small
\caption{NMPC parameters used in the experiment.}%
\label{table:NMPCPar_exp}
\begin{center}
\resizebox{\columnwidth}{!}{%https://www.overleaf.com/project/5d4db4349399f14c2b0c998b
{\begin{tabular}{l  l  l}
\hline
%\toprule
\textbf{Description}&   \textbf{Parameter} & \textbf{Value} \\
\hline
		Prediction horizon           & $T$            & 10s     \\
		C/GMRES iterations  & $kmax$          & 10         \\ 		
		Control input constraint & $u_{max}$ & [-4.5,4.5]N/kg \\
		Lateral acceleration constraint             &$a_{lat}$             & [-0.5,0.5]$m/s^{2}$        \\
		Reference speed    &$v_{Ref}$         & 40km/h  \\
        Route adhesion coefficient &$\gamma$ &0.3 \\
        Fixed distance for slope calculation & $\Delta s$ & 20m \\
		Cost function weights    &$w_{1}$        & 30         \\
		                                          &$w_{2}$         & 2.5     \\
		                                          &$w_{3}$         & 1 \\
		                                          &$w_{4}$        & 1 \\
		                                          &$w_{5}$        & 2.5 \\
		                                          &$w_{slk}$        & 0.25 \\
\hline
\end{tabular}}
}
\end{center}
\end{table}
%--

Regarding the lateral control, the parameters used for the RLQR in the experiments were
%--
\begin{gather*}
\mu = 10^{9},\\
H =
\begin{bmatrix}
1\\
1\\
1\\
1
\end{bmatrix},\hspace{1mm}E_{A} = 
\begin{bmatrix}
-0.000405618009134\\
 0.004949413574869\\
 0.000000034165611\\
 0.000024747067874
\end{bmatrix}^{T}\textnormal{and} \\ \hspace{1mm}E_{B} =
\begin{bmatrix}
-0.00114326083475285\\
-0.00114326083475285
\end{bmatrix}^{T},
\end{gather*}
%--
and the weight matrix was set to
%--
\begin{equation*}
Q = 
\begin{bmatrix}
0.1 & 0 & 0 & 0\\
0 & 0.1 & 0 & 0\\
0 & 0 & 100 & 0\\
0 & 0 & 0 & 15
\end{bmatrix} \text{and} \hspace{1mm}
R = 
\begin{bmatrix}
10000 & 0\\
0 & 10000
\end{bmatrix},
\end{equation*}
%--
with $H$, $E_{A}$ and $E_{B}$ obtained using the method in \autoref{sec:uncertainties}.

The state matrix $A$ and input matrix $B$ are obtained with the parameters from \autoref{tab:parameters-values-non-articulated}. Note that the velocity $v$ varies along the trajectory, implying a also varying $A$ and constituting a linear time-variant system. Moreover, since some terms of $A$ are divided by $v$, we set $v=1.38m/s$ as the minimum allowed velocity to avoid division by zero.

%\rev{Inserir comentário sobre as matrizes que permanecem constantes, e sobre o calculo das matrizes $A_i$ e $B_i$}.

In the lateral model, the payload parameter was considered to be $12550\hspace{1mm}kg$, which is a suitable value for regional and local freight transportation, construction and some off-road applications. However, the usual value for payload in iron ore mining operations is in the order of 35 tons. This was intentional, so that we could investigate the effects of the parametric uncertainties on the performance of the controller. Thus, the RLQR computed a feedback gain based on a payload value that is much lower than the actual value. Furthermore, due to safety recommendations, the maximum steering angle was limited to $\norm{\alpha} = 0.3491\hspace{1mm}rad$.

For these experiments, the orientation and lateral velocities were not measured by the PPP GNSS. Consequently, only the lateral displacement $\rho$ and the orientation error $\theta$ were measured, despite the control gain being computed considering all state variables. However, $\dot{y}_{1}$ and $\dot{\psi}_{1}$ have very small values for the performed path. Thus, this limitation was circumvented by setting them to zero without major impact on the results.

\subsection{Results}

Experimental tests were conducted on an unpaved road in the Carajas' mine. As shown in \autoref{fig:exp_route}, the map can be divided into two paths: First, starting at A towards B, then from B towards A. However, due to safety recommendations, only the B-A path was considered in the experiments.
\begin{figure}[H]
    \centering
    \includegraphics[width=0.5\linewidth]{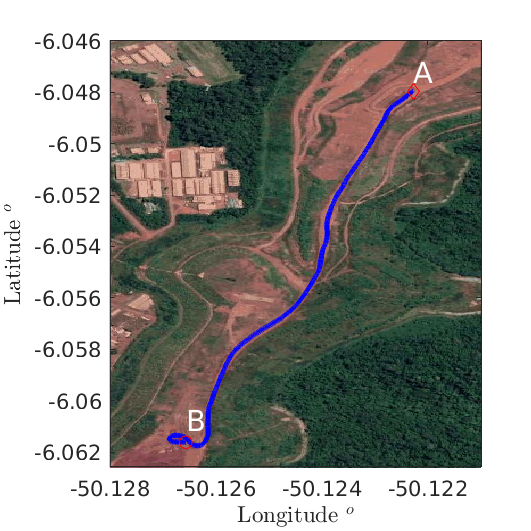}
    \caption{Path B-A used in the experiment in the Carajas' mine.}
    \label{fig:exp_route}
\end{figure}

\autoref{fig:trackBA-long} and \autoref{fig:lateral_BA} show the results obtained on path B-A for the longitudinal and lateral controls, respectively. The reference velocity was set around 20km/h, which was calculated based on the vehicle's parameters and imposed constraints, and a $max\|\dot{\alpha}\|= 0.5057\hspace{1mm}rad/s$ was observed.
%--
\begin{figure}[H]
    \centering
    \begin{subfigure}{.75\textwidth}
      \centering
      \includegraphics[width=\linewidth]{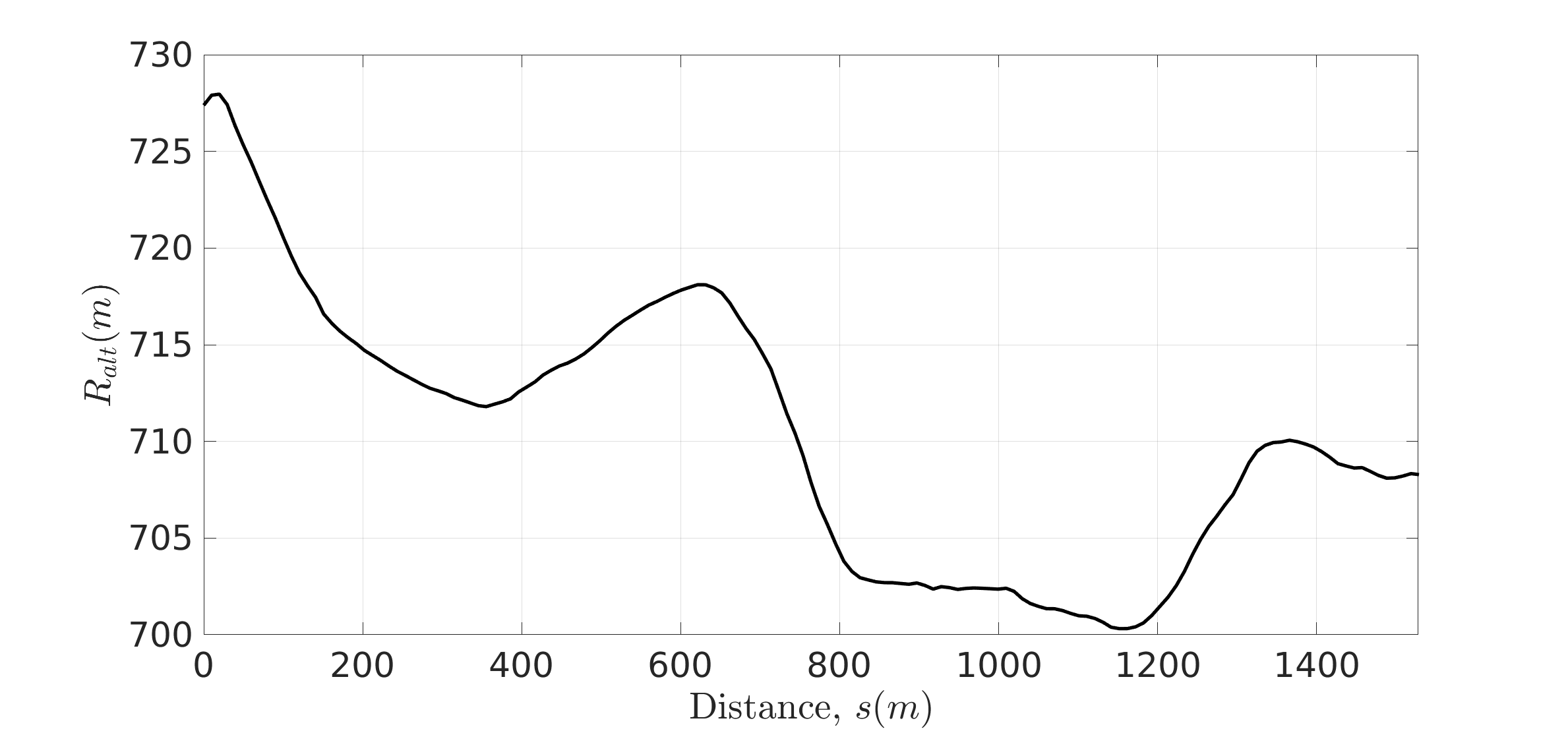}
      \caption{Altitude profile}
      \label{fig:altitude_BA}
    \end{subfigure}
    
    \begin{subfigure}{.75\textwidth}
      \centering
      \includegraphics[width=\linewidth]{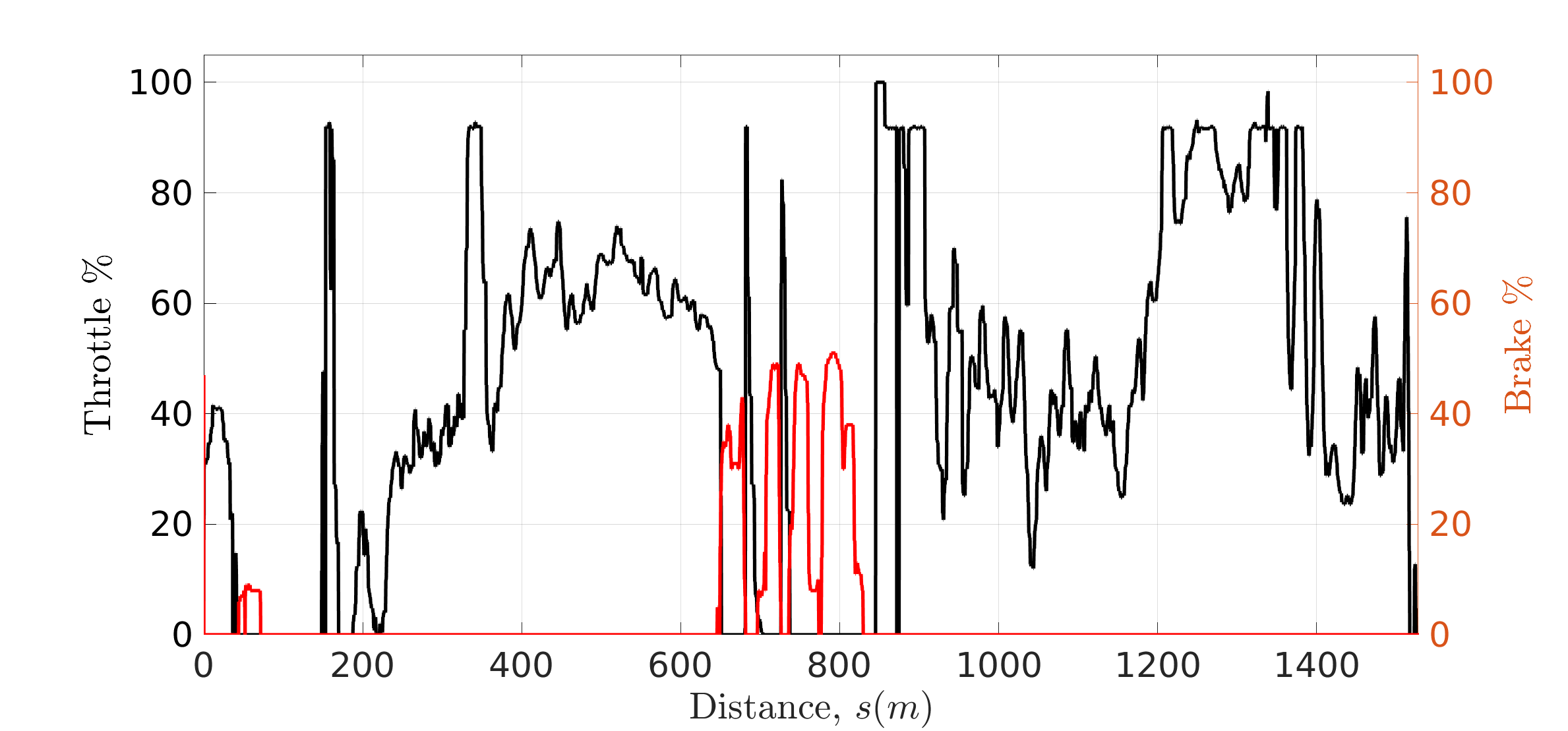}
      \caption{Throttle and brake inputs}
      \label{fig:long_control_BA}
    \end{subfigure}
    
    \begin{subfigure}{.75\textwidth}
      \centering
      \includegraphics[width=\linewidth]{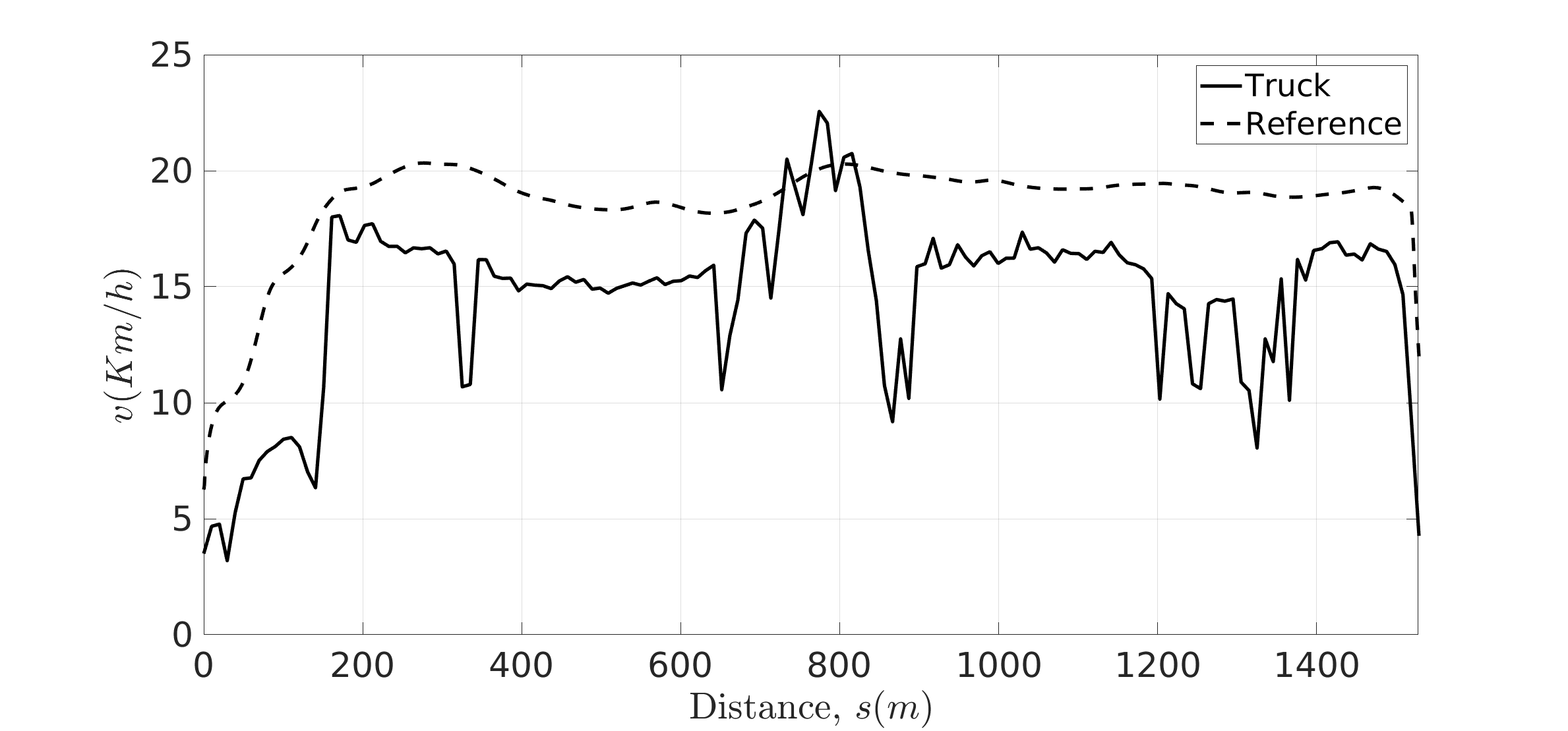}
      \caption{Velocity}
        \label{fig:velocity_BA}
    \end{subfigure}
    \caption{Experiment results with the longitudinal NMPC velocity planner and PI controller in path B-A.}
    \label{fig:trackBA-long}
\end{figure}
%--
\begin{figure}[H]
    \centering
    \begin{subfigure}{.75\textwidth}
      \centering
      \includegraphics[width=\linewidth]{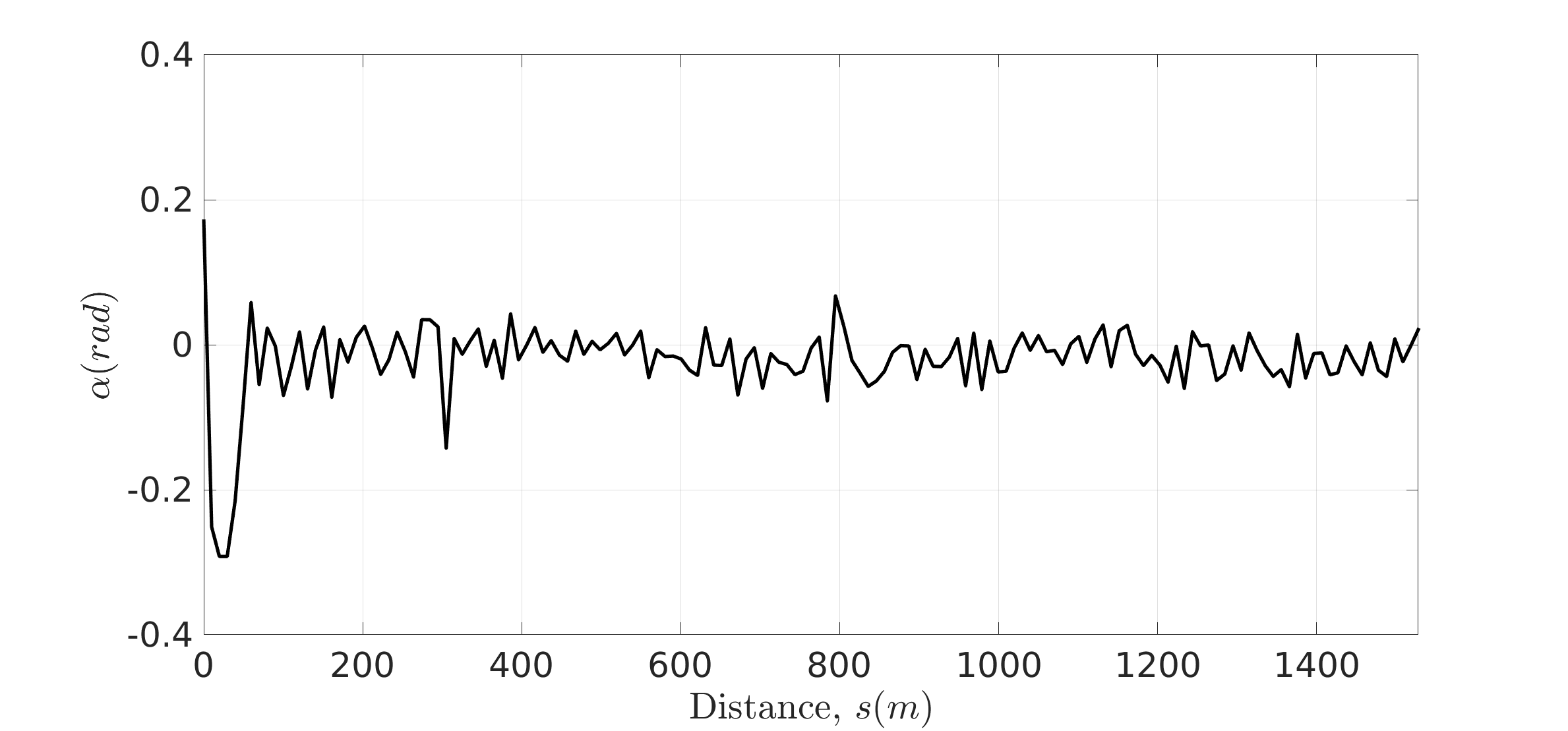}
      \caption{Lateral control}
      \label{fig:lat_control_BA}
    \end{subfigure}
    
    \begin{subfigure}{.75\textwidth}
      \centering
      \includegraphics[width=\linewidth]{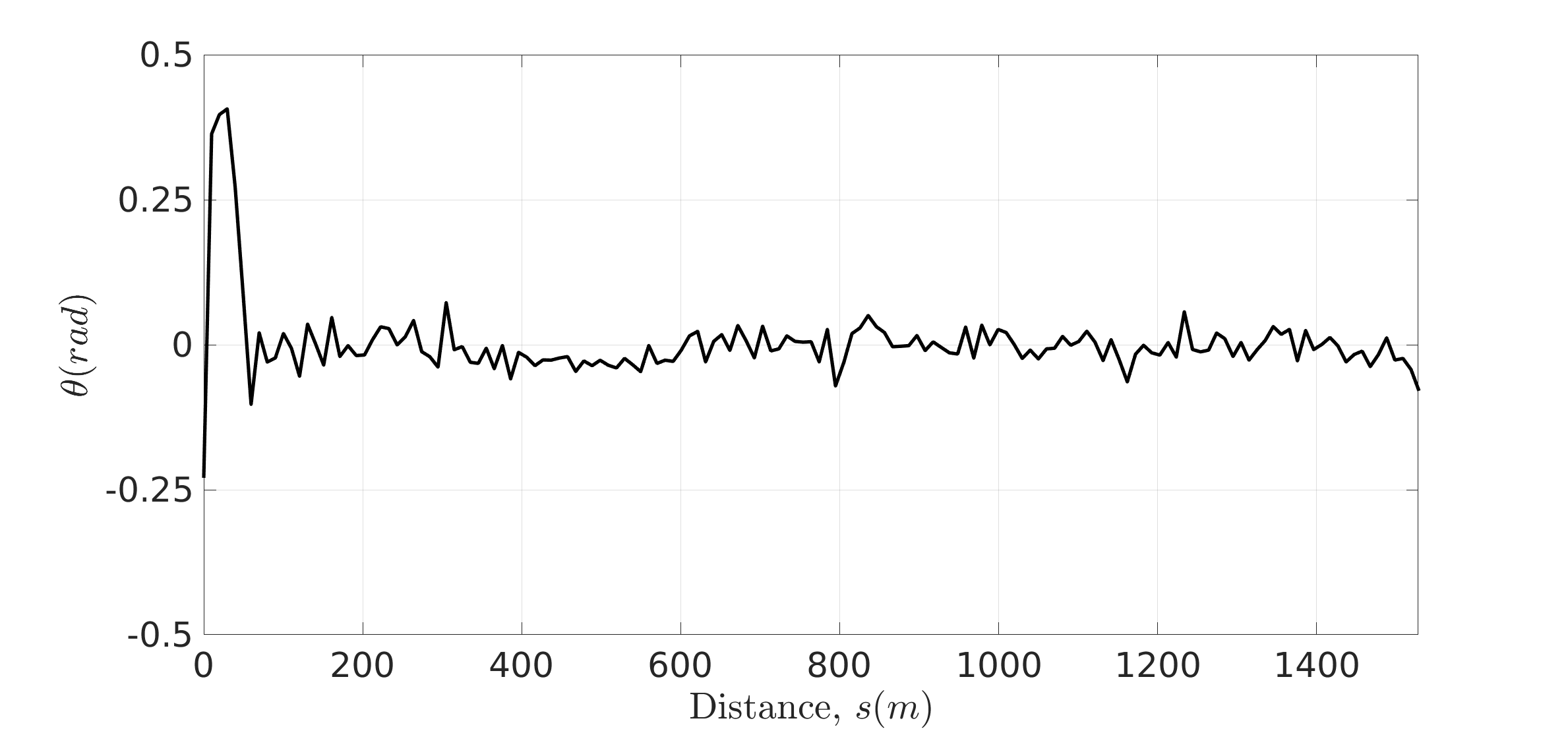}
      \caption{Heading error.}
        \label{fig:heading_BA}
    \end{subfigure}
    
    \begin{subfigure}{.75\textwidth}
      \centering
      \includegraphics[width=\linewidth]{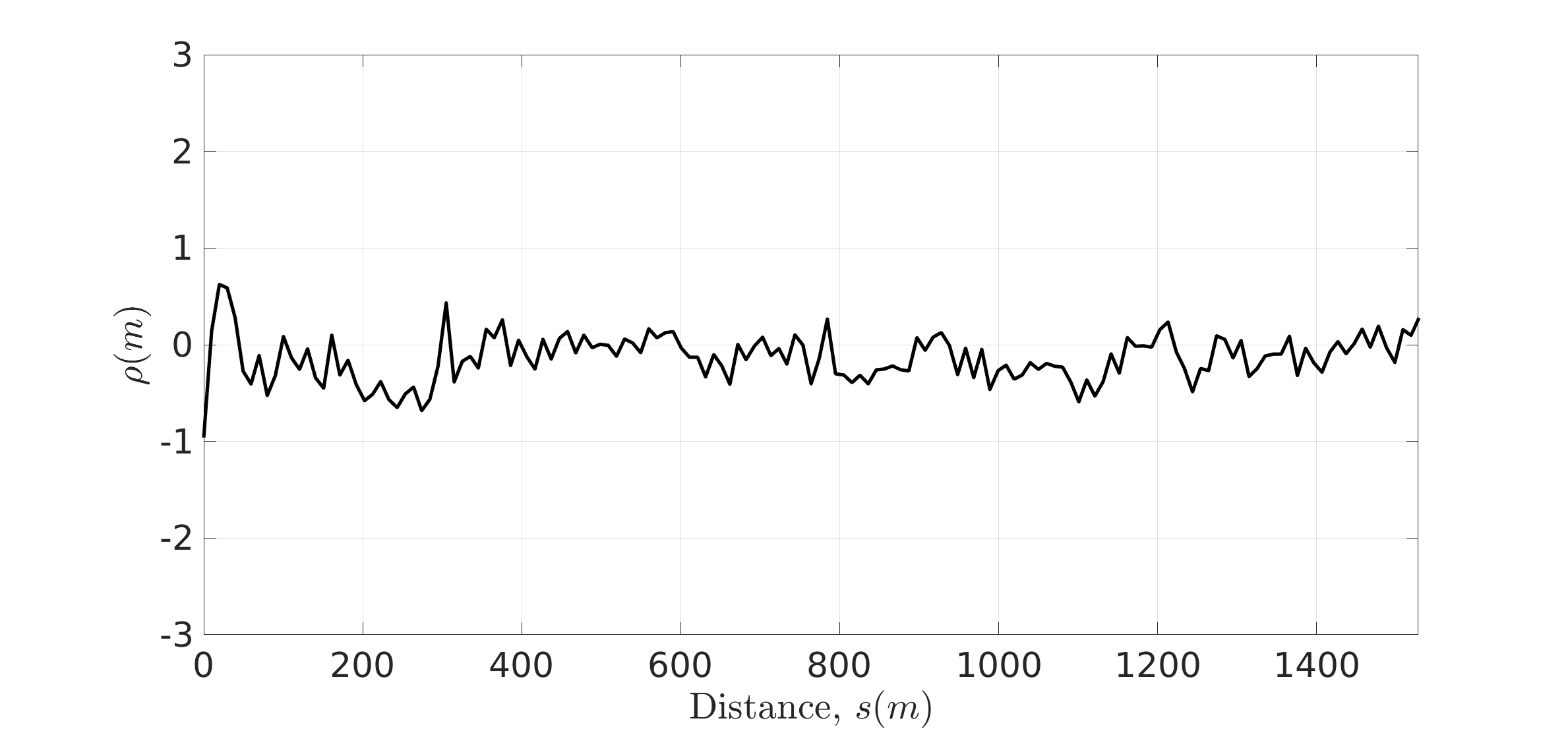}
      \caption{Lateral displacement}
      \label{fig:lat_error_BA}
    \end{subfigure}
    \caption{Experiment results with the RLQR lateral controller in track B-A.}
    \label{fig:lateral_BA}
\end{figure}
%--

\section{Discussion}
\label{sec:discussion}

The main goal of these experiments was to investigate the performance of the proposed autonomous truck driving system in an off-road environment, evaluated in an open-pit mining scenario. It is worth noting that comparisons with other controllers were not carried out in this paper, since a detailed evaluation of the proposed methods was already addressed on simulations in our previous works \citep{Barbosa2019,Caldas2019}.

The longitudinal NMPC velocity planner, in combination with a PI controller,  was able to smoothly adjust the truck's velocity, operating under the reference value and gradually decelerating when reaching the target goal. Before uphills, the controller accelerates the vehicle to reduce the engine's effort at critical slope sections. At downhills, it uses the gravitational force to maintain the vehicle's speed close to the reference value with reduced throttle levels. 

Regarding the lateral control, the RLQR presented satisfactory robustness and driving smoothness in the presence of large mass variation, a distinctive characteristic of mining operations. Driving smoothness can be verified by the $max\|\dot{\alpha}\|= 0.5057\hspace{1mm}rad/s$ obtained on path B-A. Additionally, $\dot{y}_1$ and $\dot{\psi}_1$ were very small for the paths performed in the experiments. Thus, it is likely that the limitations in feedback data did not significantly affect the path-following task.

%%===================================================%%
%%                    Conclusion     %%
%%===================================================%%

\section{Conclusion}
\label{sec:conclusion}

Experimental results of longitudinal and lateral path tracking control of an autonomous truck in a real mining environment have been presented in this paper.

The performance of the proposed system showed satisfactory results. The truck, operating in autonomous mode with its full load, kept within the imposed constraints along the entire path, showing robustness and driving smoothness. This means that parametric uncertainties were satisfactorily handled, and abrupt braking, steering and accelerations were avoided, ensuring a safer and more comfortable ride. 

Considering the promising results obtained in this paper, future work using the proposed autonomous truck can include different sensors such as cameras, LiDARs (Light Detection And Ranging) and radars, for example. Thus, it will be possible to include obstacle avoidance, fault detection and computer vision for tracking and detection features, which may improve off-road applications and even open a path for an urban scenario implementation. Additionally, methods that use machine learning to obtain the uncertainty matrices may be an alternative to the analytical method and improve performance of the RLQR controller.

\backmatter

%%===================================================%%
%%                    Statements and Declarations     %%
%%===================================================%%
\section*{Statements and Declarations}

\begin{itemize}
\item \textbf{Funding:}
This study was financed in part by Vale S.A., the Coordenação de Aperfeiçoamento de Pessoal de Nível Superior - Brasil (CAPES) - Finance Code 001 and 88887.136349/2017-00, the Brazilian National Research Council (CNPq) under grant 465755/2014-3, and the S\~ao Paulo Research Foundation (FAPESP) under grant 2014/50851-0.
\item \textbf{Conflict of interest/Competing interests:}
The authors have no competing interests to declare that are relevant to the content of this article.
\item \textbf{Ethics approval:} Not applicable.
\item \textbf{Consent to participate:} Not applicable.
\item \textbf{Consent for publication:} Not applicable.
\item \textbf{Availability of data and materials:} Not applicable.
\item \textbf{Code availability:} Not applicable.
\item \textbf{Authors' contributions:} 
    \begin{itemize}
        \item Kenny A. Q. Caldas: Conceptualization, Methodology, Validation, Formal Analysis, Visualization, Writing - Original Draft and Writing - Review \& Editing;
        \item Filipe M. Barbosa: Conceptualization, Methodology, Validation, Writing - Original Draft and Writing - Review \& Editing;
        \item Junior A. R. Silva: Software and Writing  - Original Draft;
        \item Tiago C. Santos: Software, Investigation;
        \item Iago P. Gomes: Software, Writing - Original Draft;
        \item Luis A. Rosero: Resources, Investigation;
        \item Denis F. Wolf: Conceptualization, Writing - Review \& Editing, Supervision, Funding acquisition;
        \item Valdir Grassi Jr: Conceptualization, Writing - Review \& Editing, Supervision, Funding acquisition.
    \end{itemize}
\end{itemize}

%%===================================================%%
%%         Appendices    %%
%%===================================================%%

\begin{appendices}

\section{Robust Linear Quadratic Regulator}
\label{RLQR_description}

\subsection{Problem formulation}
Consider a discrete-time system subject to parametric uncertainties as follows
\begin{equation}
\label{eq:statespaceDT}
x_{k+1} = (A_{k} + \delta A_{k})x_{k} + (B_{k}  + \delta B_{k})u_{k},
\end{equation}
where $k = 0,\hdots, N$, $x_{k}$ is the state vector, $u_{k}$ is the control input, and $A_{k}$ and $B_{k}$ are the nominal discrete model matrices. The unknown matrices $\delta A_{k}$ and $\delta B_{k}$ represent the uncertainties in the parameters and are modeled as
%--
\begin{equation}
\label{eq:RLQRuncertainties}
\begin{bmatrix}
\delta {A}_{k} & \delta B_{k}
\end{bmatrix} = H_{k} \Delta_{k} \begin{bmatrix}
E_{A_{k}} & E_{B_{k}}
\end{bmatrix},
\end{equation}
%--
where $k = 0,\hdots, N$, $H_{k}$, $E_{A_{k}}$ and $E_{B_{k}}$ are known matrices, and $\Delta_{k}$ is an arbitrary contraction matrix such that $||\Delta|| \leq 1$.

Then, the RLQR is obtained by solving the following optimization problem:
%--
\begin{equation}
\label{eq:rlqrminmax}
\underset{x_{k+1},u_{k}}{min} \ \underset{{\delta} A_{k},{\delta} B_{k}}{max} {\bar{J}^{\mu}_{k}(x_{k+1},u_{k},{\delta} A_{k},{\delta} B_{k})},
\end{equation}
%--
where
%--
\begin{align}
\label{eq:rqlqrcostfun}
 \bar{J}^{\mu}_{k}(x_{k+1},u_{k},{\delta} A_{k},{\delta} B_{k}) = 
 \begin{bmatrix}
x_{k+1}\\
u_{k}
\end{bmatrix}^T
\begin{bmatrix}
P^{r}_{k+1} & 0 \\
0 & R_{k}
\end{bmatrix} 
\begin{bmatrix}
x_{k+1}\\
u_{k}
\end{bmatrix} + \Phi^T 
\begin{bmatrix}
Q_{k} & 0\\
0 & \mu I
\end{bmatrix}\Phi
\end{align}
%--
is the cost function, with
%--
\begin{equation*}
\Phi = \left\{
\begin{bmatrix}
0 & 0 \\
I & -B_{k}-\delta B_{k}
\end{bmatrix}
\begin{bmatrix}
x_{k+1}\\
u_{k}
\end{bmatrix}
- \begin{bmatrix}
-I\\
A_{k}+{\delta} A_{k}
\end{bmatrix}
x_{k}\right\},
\end{equation*}
%--
the fixed penalty parameter $\mu > 0$, and the weighing matrices $Q_{k} \succ 0$, $R_{k} \succ 0$ and $P_{k+1} \succ 0$.

See the work of \cite{Cerri2009} for details on penalty function.

\subsection{The regulator}

The solution for the optimization problem (\ref{eq:rlqrminmax})-(\ref{eq:rqlqrcostfun}) is obtained based on the solution of a general robust regularized least-squares problem. Furthermore, the regulator has an optimal operating point for each step in the algorithm when $\mu > 0$. Thus, its regularization is reached as a result of the minimization over both $x_{k+1}(\mu)$ and $u_{k}(\mu)$ \citep{Terra2014}.

The following theorem provides a framework to compute the optimal state trajectory, control inputs and cost function. This is given in terms of an array of matrices \citep{Terra2014}.
%--
\begin{theorem}
\label{theorem_rlqr}
The optimal solution for (\ref{eq:rlqrminmax})-(\ref{eq:rqlqrcostfun}) is given by 
%--
\begin{equation} 
\label{exp_a}
\begin{bmatrix}
x^{\ast}_{k+1}(\mu)\\
u^{\ast}_{k}(\mu)\\
\tilde{J}^{\mu}_{k}(x^{\ast}_{k+1}(\mu),u^{\ast}_{k}(\mu))
\end{bmatrix}=\begin{bmatrix}
I & 0 & 0 \\
0 & I & 0 \\
0 & 0 & x_{k}(\mu)^{T}
\end{bmatrix}^{T}\begin{bmatrix}
L_{k,\mu}\\
K_{k,\mu}\\
P_{k,\mu}
\end{bmatrix}x_{k},
\end{equation}
%--
for each $\mu > 0$, where the closed-loop matrix $L_{k}$ and the feedback gain $K_{k}$ result from the recursion
\begin{equation}
\label{eq:rlqrsolution}
\begin{bmatrix}
L_{k} \\ K_{k} \\ P_{k}
\end{bmatrix} =
\begin{bmatrix}
0 & 0 & -I & \mathcal{A}_{k} & 0 & 0\\
0 & 0 & 0 & 0 & 0 & I\\ 
0 & 0 & 0 & 0 & I & 0
\end{bmatrix}
\Xi^{-1}
\begin{bmatrix}
0 \\ 0 \\ -I \\ \mathcal{A}_{k} \\ 0 \\ 0
\end{bmatrix},
\end{equation}
%--
with
%--
\begin{equation}
\Xi = \begin{bmatrix}
P^{-1}_{k+1} & 0 & 0 & 0 & I & 0 \\
0 & R^{-1}_{k} & 0 & 0 & 0 & I \\
0 & 0 & Q^{-1}_{k} & 0 & 0 & 0 \\
0 & 0 & 0 & \Sigma_{k}\left(\mu,\hat \lambda_{k}\right) & \mathcal{I} & -\mathcal{B}_{k} \\
I & 0 & 0 & \mathcal{I}^{T} & 0 & 0 \\
0 & I & 0 & -\mathcal{B}^{T} & 0 & 0
\end{bmatrix}, \nonumber
\end{equation}
\begin{align*}
& \Sigma_{k} = \begin{bmatrix}
\mu^{-1}I - \hat{\lambda}^{-1}_{k}H_{k}H_{k}^{T}  & 0 \\
0 & \hat{\lambda}^{-1}_{k}I
\end{bmatrix}, \nonumber \\
& \mathcal{I} = \begin{bmatrix}
I \\ 0
\end{bmatrix},\
\
\mathcal{A}_{k} = \begin{bmatrix}
A_{k} \\ E_{A_{k}}
\end{bmatrix}, \
\mathcal{B}_{k} = \begin{bmatrix}
B_{k} \\ E_{B_{k}}
\end{bmatrix},
\end{align*}
%--
\noindent where $P_{k+1}$ is the solution of the associated Riccati Equation and $\lambda_{k}> \|\mu H_{k}^TH_{k}\|$.
%--
Alternatively, we have
%--
\begin{equation}
\label{alternative}
\begin{aligned}
P_{k,\mu} = &L^{T}_{k,\mu}P_{k+1}L_{k,\mu} + K_{k,\mu}R_{k}K_{k,\mu} + Q_{k}+\\
&(\mathcal{I}L_{k,\mu} - \mathcal{B}_{k}K_{k,\mu} - \mathcal{A}_{k})^{T}\Sigma_{k,\mu}^{-1}(\mathcal{I}L_{k,\mu}-\mathcal{B}_{k}K_{k,\mu}-\mathcal{A}_{k}) \succ 0.
\end{aligned}
\end{equation}
%--
\end{theorem}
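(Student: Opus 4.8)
The plan is to obtain the result in three stages: first eliminate the inner maximization over the uncertainty block by a robust least-squares argument, reducing \eqref{eq:rlqrminmax}--\eqref{eq:rqlqrcostfun} to a deterministic regularized least-squares problem; then write the first-order (KKT) conditions of that problem as the block linear system governed by $\Xi$ and read off $L_k$, $K_k$, $P_k$ from its solution; finally close the recursion by dynamic programming and derive the alternative form \eqref{alternative} by back-substitution.

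First I would substitute the uncertainty parametrization \eqref{eq:RLQRuncertainties} into $\bar J_k^\mu$, so that $\delta A_k,\delta B_k$ enter only through $\Delta_k$ with $\|\Delta_k\|\le 1$; the residual $\Phi$ then becomes affine in $\Delta_k$ and the cost is a convex quadratic in $(x_{k+1},u_k)$ plus a squared-norm term affine in $\Delta_k$. The inner $\max_{\|\Delta_k\|\le 1}$ of such a quadratic is resolved by the classical robust (regularized) least-squares lemma used by \cite{Cerri2009} and \cite{Terra2014}: via the S-procedure there is a scalar multiplier $\hat{\lambda}_k$, admissible precisely when $\hat{\lambda}_k > \|\mu H_k^T H_k\|$, for which the worst-case value equals that of an unconstrained weighted least-squares problem whose weighting is exactly $\Sigma_k(\mu,\hat{\lambda}_k)$; the bound on $\hat{\lambda}_k$ is what guarantees $\mu^{-1}I-\hat{\lambda}_k^{-1}H_kH_k^{T}\succ 0$, hence $\Sigma_k\succ 0$. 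Strong duality here is the single-quadratic-constraint S-lemma, so no gap is incurred.

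Next, with the uncertainty eliminated, what remains is a regularized least-squares problem: minimize over $(x_{k+1},u_k)$ a sum of quadratic terms weighted by $P_{k+1}$, $R_k$, $Q_k$ and $\Sigma_k^{-1}$, coupled through the softly enforced dynamics encoded by $\mathcal{I}$, $\mathcal{A}_k$, $\mathcal{B}_k$. Introducing auxiliary variables for the linear relations and writing stationarity, the assembled conditions are exactly the block system \eqref{eq:rlqrsolution} with coefficient matrix $\Xi$; invertibility of $\Xi$ follows from $P_{k+1},Q_k,R_k\succ 0$, $\Sigma_k\succ 0$ and the full-rank structure of the constraint blocks. The part of the solution linear in $x_k$ gives $x_{k+1}^{\ast}=L_kx_k$ and $u_k^{\ast}=K_kx_k$, and evaluating the now deterministic quadratic cost at the optimum gives $\tilde J_k^\mu=x_k^TP_kx_k$ with $P_k$ produced by the same array; this is \eqref{exp_a}. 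Running this by backward induction on $k=N-1,\dots,0$ from the terminal $P_N$ keeps the cost-to-go quadratic and $P_k\succ 0$, so $P_k$ is the associated Riccati solution feeding the next step. The alternative expression \eqref{alternative} is then obtained by substituting $x_{k+1}^{\ast}=L_{k,\mu}x_k$, $u_k^{\ast}=K_{k,\mu}x_k$ back into $\bar J_k^\mu$ and collecting the coefficient of $x_kx_k^{T}$: the $P_{k+1},R_k,Q_k$ terms yield $L_{k,\mu}^{T}P_{k+1}L_{k,\mu}+K_{k,\mu}^{T}R_kK_{k,\mu}+Q_k$ and the residual-of-dynamics term yields the $\Sigma_{k,\mu}^{-1}$-weighted quadratic in $\mathcal{I}L_{k,\mu}-\mathcal{B}_kK_{k,\mu}-\mathcal{A}_k$, with positive definiteness inherited from the summands and $Q_k\succ 0$.

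The main obstacle is the robust least-squares / S-procedure step: proving strong duality for the inner maximization, identifying the exact weighting $\Sigma_k(\mu,\hat{\lambda}_k)$, and pinning down the admissible range of $\hat{\lambda}_k$ — everything afterwards is linear algebra (forming and solving the KKT array, then one back-substitution). I would lean on \cite{Terra2014} and \cite{Cerri2009} for that machinery and keep the contribution in the explicit single-array packaging of the recursion.
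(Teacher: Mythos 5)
Your outline is correct and follows essentially the same route as the paper, whose entire proof of \autoref{theorem_rlqr} is a deferral to the results of \cite{Cerri2009} (and \cite{Terra2014}): the robust regularized least-squares step with the multiplier $\hat{\lambda}_k > \|\mu H_k^T H_k\|$ yielding $\Sigma_k(\mu,\hat{\lambda}_k)$, the KKT block system giving the array \eqref{eq:rlqrsolution}, and the back-substitution giving \eqref{alternative} are precisely the machinery of those references. Since the paper supplies no independent argument, your sketch is consistent with its approach and in fact more explicit than what the paper records.
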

%--
\begin{proof}
It follows from the results shown by \cite{Cerri2009}.
\end{proof}
%--
The parameter $\mu$ relates to system's robustness, thus ensuring regularization and the validity of (\ref{eq:statespaceDT}). This means that for maximum robustness, $\mu \rightarrow \infty$ and consequently $\Sigma_{k} \rightarrow 0$.

The matrix $P_{k,\mu}$ is finite and $\mathcal{I}L_{k,\mu}-\mathcal{G}_{k}K_{k,\mu}-\mathcal{F}_{k} \rightarrow 0$ for each iteration of (\ref{alternative}), as shown by \cite{Terra2014}. Therefore,
%--
\begin{equation}
\label{feedback}
\begin{aligned}
&L_{k,\infty} = A_{k}+G_{k}K_{k,\infty}\\
&E_{A_{k}} + E_{B_{k}}K_{k,\infty} = 0,
\end{aligned}
\end{equation}
%--
and a sufficient condition that satisfies (\ref{feedback}) is
%--
\begin{equation}
rank\,\big(\begin{bmatrix}
E_{A_{k}} & E_{B_{k}}
\end{bmatrix}\big) = rank\,\big(E_{B_{k}}\big).
\end{equation}
%--
See the paper of \cite{Terra2014} for more details on convergence and stability analysis. Lastly, the \autoref{rlqr_algor_ap} shows the RLQR obtained with \autoref{theorem_rlqr}.

%---------------------------------------------------------------------------------------------------------------------------------------
\begin{algorithm*}[h]
	\textbf{Uncertain model:} Consider the model (\ref{eq:statespaceDT})-(\ref{eq:RLQRuncertainties}) and criterion (\ref{eq:rlqrminmax})-(\ref{eq:rqlqrcostfun}) with known\\ $A_{k}$, $B_{k}$, $E_{A_{k}}$, $E_{B_{k}}$, $Q_{k} \succ 0$, and $R_{k} \succ 0$ for all $k$.\\
	\textbf{Initial conditions:} Define $x_{0}$ and $P_{k,N}\succ{0}$.\\
	\textbf{Step 1:} \textit{(Backward)} For all $k=N-1,\ldots,0$, compute\\
	$\hfill\begin{bmatrix}
	L_{k}\\
	K_{k}\\
	P_{k}
	\end{bmatrix}
	=
	\begin{bmatrix}
	0 & 0 & 0 \\
	0 & 0 & 0 \\
	0 & 0 & -I \\
	0 & 0 & A_{k} \\
	0 & 0 & E_{A_{k}} \\
	I & 0 & 0 \\
	0 & I & 0
	\end{bmatrix}^{T}
	\begin{bmatrix}
	P_{k+1}^{-1}& 0 & 0 & 0 & 0 & I & 0\\
	0 & R_{k}^{-1} & 0 & 0 & 0 & 0 & I\\
	0 & 0 & Q_{k}^{-1} & 0 & 0 & 0 & 0\\
	0 & 0 & 0 & 0 & 0 & I & -B_{k}\\
	0 & 0 & 0 & 0 & 0 & 0 & -E_{B_{k}}\\
	I & 0 & 0 & I & 0 & 0 & 0\\
	0 & I & 0 & -B_{k}^{T} & -E_{B_{k}}^{T} & 0 & 0
	\end{bmatrix}^{-1}
	\begin{bmatrix}
	0\\
	0\\
	-I\\
	A_{k}\\
	E_{A_{k}}\\
	0\\
	0
	\end{bmatrix}.\hfill$\\ \vspace{0.1cm}
	\textbf{Step 2:} \textit{(Forward)} For each $k=0,...,N-1$, obtain
	\\
	$\hspace*{\fill} \begin{bmatrix}
	x^{*}_{k+1}\\
	u^{*}_{k}
	\end{bmatrix}
	=
	\begin{bmatrix}
	L_{k}\\
	K_{k}
	\end{bmatrix}
	x^{*}_{k},\hspace*{\fill}$ \\
	\\
	with the total cost given by $J_{r}^{*}=x_{0}^{T}P_{0}x_{0}$.
% 	\\
	\caption{The Robust Linear Quadratic Regulator}
	\label{rlqr_algor_ap}
\end{algorithm*}

\section{Uncertainties matrices}
\label{sec:uncertainties}

The uncertainties in (\ref{eq:uncertainties}) can be estimated based on the inertial variation related to minimum and maximum payload values, given by $m_{p_{max}}$ and $m_{p_{min}}$. Thus, the parameter values of the matrices $F$ and $G$ can vary according to 
\begin{equation}
\Gamma_{F} = F_{m_{p_{min}}}-F_{m_{p_{max}}},
\end{equation}
\begin{equation}
\Gamma_{G} = G_{m_{p_{min}}}-G_{m_{p_{max}}},
\end{equation}
where $F_{m_{p_{min}}}$, $G_{m_{p_{min}}}$, $F_{m_{p_{max}}}$ and $G_{m_{p_{max}}}$ are the discretized state-space matrices in (\ref{eq:estate-space_rewritten}) corresponding to $m_{p_{max}}$ and $m_{p_{min}}$. 
Next, the column that is most affected by the mass fluctuations is chosen, in this case, the second one.

The considered uncertainty range corresponds to the fully loaded and unloaded vehicle operation, $m_{p_{max}}$ and $m_{p_{min}}$, respectively. Thus, the matrices $E_{F}$ and $E_{G}$ are then calculated as:
%--
\begin{equation}
E_{F} = 
\begin{bmatrix}
1\\
1\\
1\\
0.1\\
\end{bmatrix}^T
\begin{bmatrix}
\underset{\Gamma_{F_{1,2}}}{arg}(\abs{\Gamma_{F_{1,2}}}) & 0 & 0 & 0 \\ 0 & \underset{\Gamma_{F_{2,2}}}{arg}(\abs{\Gamma_{F_{2,2}}}) & 0 & 0\\ 0 & 0 & \underset{\Gamma_{F_{3,2}}}{arg}(\abs{\Gamma_{F_{3,2}}}) & 0 \\ 0 & 0 & 0 & \underset{\Gamma_{F_{4,2}}}{arg}(\abs{\Gamma_{F_{4,2}}})
\end{bmatrix}
\end{equation}
%--
Following this, select the largest value in $\Gamma_G$ and obtain $E_G$ as
%--
\begin{equation}
\label{Eg_calc}
E_{G} = 
\begin{bmatrix}
0.1 \\ 
0.1
\end{bmatrix}^T
\begin{bmatrix}
\underset{\Gamma_{G_{j,1}}}{arg} \{\max(\abs{\Gamma_{G_{j,1}}})\} & 0\\
0 & \underset{\Gamma_{G_{j,1}}}{arg} \{\max(\abs{\Gamma_{G_{j,1}}})\}
\end{bmatrix}
\end{equation}
%--
and $H = [1,1,1,1]^T$. Finally, the uncertainties matrices are obtained through (\ref{eq:RLQRuncertainties}) as
%--
\begin{equation*}
\begin{bmatrix}
\delta {F} & \delta G
\end{bmatrix} = H \Delta \begin{bmatrix}
E_{F} & E_{G}
\end{bmatrix},
\end{equation*}
%--
where $\Delta$ is a scalar represented by the mass variation.

% An appendix contains supplementary information that is not an essential part of the text itself but which may be helpful in providing a more comprehensive understanding of the research problem or it is information that is too cumbersome to be included in the body of the paper.

%%=============================================%%
%% For submissions to Nature Portfolio Journals %%
%% please use the heading ``Extended Data''.   %%
%%=============================================%%

%%=============================================================%%
%% Sample for another appendix section			       %%
%%=============================================================%%

%% \section{Example of another appendix section}\label{secA2}%
%% Appendices may be used for helpful, supporting or essential material that would otherwise 
%% clutter, break up or be distracting to the text. Appendices can consist of sections, figures, 
%% tables and equations etc.

\end{appendices}

%%===========================================================================================%%
%% If you are submitting to one of the Nature Portfolio journals, using the eJP submission   %%
%% system, please include the references within the manuscript file itself. You may do this  %%
%% by copying the reference list from your .bbl file, paste it into the main manuscript .tex %%
%% file, and delete the associated \verb+\bibliography+ commands.                            %%
%%===========================================================================================%%

\bibliography{ref}% common bib file
%% if required, the content of .bbl file can be included here once bbl is generated
%%\input sn-article.bbl

%% Default %%
%%\input sn-sample-bib.tex%

\end{document}